\title{ Self-Adjointness of 
 Deformed Unbounded Operators} 
\author{A. Much\\ \footnotesize{Instituto de Ciencias Nucleares, UNAM, M\'exico D.F. 04510, M\'exico}}
\newtheorem{theorem}{\textsc{Theorem}}[section]
\newtheorem{lemma}{\textsc{Lemma}}[section]
\newtheorem{proposition}{\textsc{Proposition}}[section]
\theoremstyle{definition}
\newtheorem{definition}{\textsc{Definition}}[section]
\theoremstyle{remark}
\newtheorem{remark}{Remark}[section]
 \numberwithin{equation}{section} 
\begin{document}
\maketitle
\abstract{
\begin{center}
Dedicated to the 80th Birthday of Prof. Marcos Rosenbaum
\end{center} $\,$\\ We consider deformations of unbounded operators  by using the novel construction tool of warped convolutions.  By using the Kato-Rellich theorem we show that  unbounded self-adjoint deformed operators are self-adjoint if they satisfy a certain condition. This condition proves itself to be necessary for the  oscillatory integral to be well-defined. 
Moreover, different proofs are given for self-adjointness of deformed unbounded operators in the context of quantum mechanics and quantum field theory.} 
  \tableofcontents

\section{Introduction}  
The theory of deformation has an interesting and long history. Initially it was thought of as the so called mathematically rigorous path to the quantization of classical, i.e. Newtonian, observables in quantum mechanics (QM).    However, only recently a rigorous apparatus was constructed in which quantized fields were deformed in order to describe quantum fields living on a quantum space-time. This major achievement was made popular under the name of warped convolutions,  \cite{GL1, BS, BLS}.  Usually, this method
is used in the realm of quantum field theory to deform free quantum fields and to construct
non-trivial interacting fields which was done in \cite{A, GL1, GL2, GL4, GL5, Mor, MUc}. It was also used in quantum measurement theory and quantum mechanics \cite{AA, Muc1}. One of the major advantages of this method   is its easy accessibility to a physical regimen. \newline\newline   By using this novel tool in a quantum mechanical context, we recast  many fundamental physical effects involving electromagnetism.   Moreover, we were able to produce gravitomagnetic effects and interaction between magnetic and gravitomagnetic fields by this  deformation procedure.
 \newline\newline
The idea developed in the QM-case was extended in  \cite{Muc3} to a relativistic quantum field theoretical context.  We first defined a quantum field theoretical version of an operator that is conjugate to the second-quantized momentum. This is done by taking the QM-definition using the momentum operator and performing a second-quantization onto the bosonic Fock space.  Next we deformed the conjugate spatial operator and obtained by calculating the  commutator of the deformed coordinate operators  terms resembling relativistic corrections to the standard Moyal-Weyl. 
 \newline\newline
One question still remains unanswered in the context of QM and QFT operators namely, are the deformed unbounded self-adjoint operators self-adjoint? From the original work we know  \cite{BS, BLS} they are at least symmetric if the undeformed operators are self-adjoint. It is a fact of life that in  quantum physics we have to deal with unbounded operators; hence a thorough investigation of self-adjointness of deformed  operators is given in this work. The author will give two different proofs concerning the self-adjointness of the deformed Hamiltonian and one proof in the QFT-case for the deformation of the Newton-Wigner operator. This is essential since in the physical world an observable is obliged to have real eigenvalues which is only guaranteed when self-adjointness is given. 
\newline\newline
The question of self-adjointness for a general unbounded operator is further investigated. By imposing certain constraints on the action of the automorphisms of the group $\mathbb{R}^n$ on the self-adjoint undeformed operator, essential self-adjointness follows by W\"ust's theorem. 
\newline\newline
This work is organized as follows: In the second  Section 2 we review the important preliminaries for this work, i.e. we give a short but detailed summary of  warped convolutions and the Kato-Rellich theorem. Deformations in quantum mechanics are described in more detail in Section 3, where the question of self-adjointness of the deformed Hamiltonian is answered. Section 4 is devoted to study the self-adjointness of the deformed spatial conjugate operator which is the generator of the QFT-Moyal-Weyl space-time. The last section investigates the intermediate relation between the well-definedness of the deformation and the self-adjointness of the resulting operator.

\section{Preliminaries}

\subsection{Warped Convolutions}
In  this section we write all   basic definitions and lemmas of the deformation known under the name of warped convolutions. For  proofs of the respective lemmas we refer the reader to the original works \cite{BLS, GL1}. \newline\newline 
The authors start their investigation with a $C^{*}$-dynamical system $(\mathcal{A},\mathbb{R}^n)$.
It consists of a $C^{*}$-algebra $\mathcal{A}$  equipped with a strongly continuous automorphic action of the group $\mathbb{R}^n$ which will be denoted by $\alpha$.
Furthermore, let  $\mathcal{B}(\mathscr{H})$ be the Hilbert space of bounded
operators on $\mathscr{H}$ and let  the adjoint action $\alpha$ be implemented  by the weakly continuous unitary representation $U$. Then,  it is argued that  since the unitary representation $U$ can be extended to the algebra $\mathcal{B}(\mathscr{H})$,   there is no lost of generality   when one proceeds to  the $C^{*}$-dynamical system  $(C^{*}\subset  \mathcal{B}(\mathscr{H}),\mathbb{R}^n)$.  Here $C^{*}\subset  \mathcal{B}(\mathscr{H})$  is the $C^{*}$-algebra of all operators on which $\alpha$ acts strongly continuously.
\newline\newline 
Hence, we start by assuming the
existence of a strongly continuous unitary group $U$ that is a representation of the additive
group $\mathbb{R}^{n}$, $n\geq2$, on some separable Hilbert space $\mathscr{H}$. 
Moreover, to define the deformation of operators belonging to a $C^{*}$-algebra $C^{*}\subset  \mathcal{B}(\mathscr{H})$, we consider
elements belonging to the sub-algebra $C^{\infty}\subset C^{*} $. The sub-algebra $C^{\infty}$ is
defined to be  the $*-$algebra of smooth elements (in the norm topology) with respect to $\alpha$, which
is the adjoint action of a weakly continuous
unitary representation $U$ of $\mathbb{R}^{n}$ given by  
\begin{equation*}
 \alpha_{x}(A)=U(x)\,A\,U(x)^{-1}, \quad x \in \mathbb{R}^{n}.
\end{equation*} 
Let $\mathcal{D}$ be the dense domain of vectors in $\mathscr{H}$ which transform
smoothly under the adjoint action of $U$.
Then, the warped convolutions  for operators  $A\in C^{\infty}$ are given by
the following definition.\\

\begin{definition}
 Let $\theta$ be a real skew-symmetric matrix on $\mathbb{R}^{n}$ w.r.t. a bilinear form, let $A\in C^{\infty}$  and let $E$ be
the spectral resolution of the unitary operator $U$.  Then, the corresponding \textbf{warped convolution} $A_{\theta}$
of $A$ is defined on the dense domain $\mathcal{D}$ according to
\begin{equation}\label{WC}
 A_{\theta }:=\int \alpha_{\theta x}(A)dE(x),
\end{equation}
where $\alpha$ denotes the adjoint action of $U$ given by $\alpha_{k}(A)=U(k)\,A\,U(k)^{-1}$.
\end{definition} 
The restriction in the choice of operators is owed to the fact that the deformation is performed
with operator valued integrals. Furthermore,  one can represent the warped
convolution of $
 {A} \in {C}^{\infty}$    by $\int \alpha_{\theta x}(A) dE(x)$, on the dense domain
$\mathcal{D}\subset\mathscr{H}$ of vectors smooth w.r.t. the action of $U$,  in terms of
strong limits 
\begin{equation*}
\int\alpha_{\theta x}(A) dE(x)\Phi=(2\pi)^{-n}
\lim_{\epsilon\rightarrow 0}
\iint  dx\, dy \,\chi(\epsilon x,\epsilon y )\,e^{-ixy}\, U(y)\, \alpha_{\theta x}(A)\Phi,  
\end{equation*}
where $\chi \in\mathscr{S}(\mathbb{R}^{n}\times\mathbb{R}^{n})$ with $\chi(0,0)=1$. In an intermediate step of showing this equivalence it was as well proven that $\mathcal{D}$ is \textbf{stable} under the deformation. 
The latter representation makes calculations and proofs concerning the existence of  integrals
easier. 
During this work we   use  both representations.  \\\\The following lemma shows first that the two different warped
convolutions are equivalent. Second, it demonstrates how the adjoint  of
the warped convoluted operator is obtained.\newline
\begin{lemma}\label{wcl2}
 Let $\theta$ be a real skew symmetric matrix on $\mathbb{R}^n$ and let $  {A} \in
\mathcal{C}^{\infty}$.
Then 
\begin{enumerate} \renewcommand{\labelenumi}{(\roman{enumi})}
 \item   $\int \alpha_{\theta x}(A)dE(x)=\int dE(x)\alpha_{\theta x}(A)$
 \item $\left(\int \alpha_{\theta x}(A)dE(x)\right)^{*}\subset\int \alpha_{\theta
x}(A^{*})dE(x)$
\end{enumerate}
\end{lemma}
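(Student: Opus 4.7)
The two assertions are consequences of the oscillatory integral representation
\[
\int\alpha_{\theta x}(A)\,dE(x)\,\Phi=(2\pi)^{-n}\lim_{\epsilon\to 0}\iint dx\,dy\,\chi(\epsilon x,\epsilon y)\,e^{-ixy}\,U(y)\,\alpha_{\theta x}(A)\Phi
\]
already recalled in the excerpt, together with two structural facts: $\alpha$ is implemented by a unitary representation of an abelian group, so $U(y)U(\theta x)=U(\theta x)U(y)$ and $\alpha_{\theta x}(A)^{*}=\alpha_{\theta x}(A^{*})$; and $\theta$ is skew-symmetric, so that $(\theta x)\cdot x=0$ and hence $U(\pm\theta x)\,dE(x)=dE(x)$ on the diagonal spectral subspace.

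\textbf{Part (i).} For $\Phi\in\mathcal{D}$ I would substitute the Fourier-type expression for $dE(x)$ coming from $U(y)=\int e^{ixy}\,dE(x)$ into both $\int\alpha_{\theta x}(A)\,dE(x)\,\Phi$ and $\int dE(x)\,\alpha_{\theta x}(A)\,\Phi$, after regularising with $\chi(\epsilon x,\epsilon y)$. In each case one obtains the same integrand $(2\pi)^{-n}\chi(\epsilon x,\epsilon y)e^{-ixy}U(y)\alpha_{\theta x}(A)\Phi$: the abelian commutation $U(y)U(\theta x)=U(\theta x)U(y)$ allows one to freely move $U(y)$ past $U(\theta x)$, and the skew-symmetry identity $U(\pm\theta x)dE(x)=dE(x)$ removes the only obstruction to rearranging the order of the symbols $\alpha_{\theta x}(A)$ and $dE(x)$. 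Pairing with a smooth test vector $\Psi\in\mathcal{D}$ yields absolutely convergent integrals, and Fubini then identifies the two expressions as the same operator on $\mathcal{D}$.

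\textbf{Part (ii).} I would compute, for $\Psi,\Phi\in\mathcal{D}$,
\[
\langle\Psi,A_{\theta}\Phi\rangle=(2\pi)^{-n}\lim_{\epsilon\to 0}\iint dx\,dy\,\chi(\epsilon x,\epsilon y)\,e^{-ixy}\,\langle U(y)^{*}\Psi,\alpha_{\theta x}(A)\Phi\rangle,
\]
use unitarity of $U$ to bring $U(y)^{*}=U(-y)$ back onto the right, apply $\alpha_{\theta x}(A)^{*}=\alpha_{\theta x}(A^{*})$, and perform the change of variables $(x,y)\mapsto(-x,-y)$ (which preserves $e^{-ixy}$ and leaves $\chi\in\mathscr{S}$ well behaved, still with $\chi(0,0)=1$). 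Using part (i) to reorder the resulting oscillatory integral then rewrites the form as $\langle(A^{*})_{\theta}\Psi,\Phi\rangle$. This identity shows that every $\Psi\in\mathcal{D}$ lies in the domain of $A_{\theta}^{*}$ and that $A_{\theta}^{*}$ is given there by $(A^{*})_{\theta}$, which is the content of the claimed operator inclusion once $(A^{*})_{\theta}$ is understood as a (densely defined) operator on its full domain.

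\textbf{Main obstacle.} The algebraic manipulations are essentially bookkeeping; the genuine work is the analytic justification of every interchange of the $x$- and $y$-integrations and of the passage to the limit $\epsilon\to 0$. This is done by exactly the same device that underlies the oscillatory representation itself: repeated integration by parts, using smoothness of $\Phi$ and $\Psi$ under the $U$-action together with the rapid decay of $\chi$, produces an absolutely convergent majorant that legitimises Fubini's theorem and dominated convergence. Once this analytic scaffolding is in place, the skew-symmetry of $\theta$ and the $\ast$-automorphism property of $\alpha$ do the rest.
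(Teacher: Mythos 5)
The paper itself offers no proof of this lemma: it is quoted from the warped--convolution literature and the text explicitly defers to the original works \cite{BLS,GL1}. So the only meaningful comparison is with the argument given there, and your plan is essentially that argument: reduce both claims to the $\epsilon$-regularised oscillatory-integral representation; obtain (i) from the translation $y\mapsto y\pm\theta x$, which leaves $e^{-ixy}$ invariant because $x\cdot\theta x=0$ by skew-symmetry, combined with the group law $U(y\pm\theta x)=U(y)U(\pm\theta x)$; obtain (ii) by pairing against a second smooth vector, moving $U(y)$ and $\alpha_{\theta x}(A)$ to the left slot via unitarity and the $\ast$-automorphism property, changing variables $(x,y)\mapsto(-x,-y)$, and invoking (i). Your shorthand ``$U(\pm\theta x)\,dE(x)=dE(x)$'' is the formal statement of the skew-symmetry mechanism; to make (i) airtight you should carry out the substitution inside the regularised double integral (checking that $\chi(\epsilon x,\epsilon(y\pm\theta x))$ is still an admissible cut-off equal to $1$ at the origin, so the strong limit is unaffected) rather than manipulating the symbol $dE(x)$ directly --- note that commutativity of the $U$'s alone does not let $U(y)$ pass $\alpha_{\theta x}(A)$, since $U(y)$ need not commute with $A$; the translation in $y$ is the essential step. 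The analytic justification you defer to integration by parts and dominated convergence is where all the real work in \cite{BLS} sits, but since $A\in C^{\infty}\subset\mathcal{B}(\mathscr{H})$ and $\Phi\in\mathcal{D}$, the standard symbol estimates do apply.

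One substantive caveat on (ii): the identity $\langle\Psi,A_{\theta}\Phi\rangle=\langle (A^{*})_{\theta}\Psi,\Phi\rangle$ for $\Psi,\Phi\in\mathcal{D}$ yields the inclusion $(A^{*})_{\theta}\subset (A_{\theta})^{*}$, i.e.\ the adjoint of the warped convolution \emph{extends} the warped convolution of the adjoint; taken at face value, with both operators defined on $\mathcal{D}$, this is the opposite of the inclusion as printed, since $(A_{\theta})^{*}$ is closed and its domain is in general strictly larger than $\mathcal{D}$. The two readings are reconciled only because for $A\in C^{\infty}$ both warped convolutions extend to bounded operators on all of $\mathscr{H}$, where equality holds. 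Your closing sentence gestures at this, but you should state explicitly which inclusion your computation actually delivers and why it suffices.
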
 
In the following lemma we introduce the deformed product, known as the Rieffel product \cite{RI}
by using the formalism of warped convolutions. The circumstance that the two are interrelated is due to
the fact that  warped convolutions supply isometric
representations of Rieffel's strict deformations of $C^{*}$-dynamical systems with
actions of $\mathbb{R}^{n}$.  The definition of the Rieffel product, given by warped convolutions, is used in the current work  to calculate
the deformed commutators. \newline
\begin{lemma}\label{dpl}
Let $\theta$ be a real skew-symmetric matrix on $\mathbb{R}^{n}$  w.r.t. a bilinear form, $  {A},   {B} \in
 C^{\infty}$  and let $\Phi\in\mathcal{D}$. Then, the product of two deformed operators $A,   B$ is given by
\begin{equation*}
 {A}_{\theta}   {B}_{\theta} \Phi= (A\times_{\theta}B)_{\theta}\Phi,
\end{equation*}
where the \textbf{deformed product} $\times_{\theta}$ is  the Rieffel product
on $ {C}^{\infty}$ defined as, 
\begin{equation}\label{dp0}
(A\times_{\theta}B )\Phi=(2\pi)^{-n}
\lim_{\epsilon\rightarrow 0}
\iint dx\, dy\,\chi(\epsilon x,\epsilon y )e^{-ixy} \alpha_{\theta x}(A)\alpha_{y}(B)\Phi.
\end{equation}
\end{lemma}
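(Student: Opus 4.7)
My plan is to verify the identity by a direct computation with the oscillatory-integral representations of the warped convolutions on both sides. Since the smooth domain $\mathcal{D}$ is stable under the deformation (as remarked immediately before Lemma~\ref{wcl2}), $B_{\theta}\Phi\in\mathcal{D}$, so I can apply the oscillatory form of $A_{\theta}$ to it. Substituting also the oscillatory form of $B_{\theta}\Phi$ produces the double regularized integral
\begin{equation*}
A_{\theta}B_{\theta}\Phi = (2\pi)^{-2n}\lim_{\epsilon\to 0}\iiiint dx\,dy\,dx'\,dy'\,\chi(\epsilon x,\epsilon y)\chi(\epsilon x',\epsilon y')e^{-ixy-ix'y'}U(y)\alpha_{\theta x}(A)U(y')\alpha_{\theta x'}(B)\Phi.
\end{equation*}

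The second step is to reorder factors inside the integrand using the covariance relation $\alpha_{\theta x}(A)U(y')=U(y')\alpha_{\theta x-y'}(A)$, which is a direct consequence of $\alpha=\mathrm{Ad}\,U$ together with the commutativity of $U$ on the abelian group $\mathbb{R}^{n}$. Combining with $U(y)U(y')=U(y+y')$ transforms the integrand into $e^{-ixy-ix'y'}U(y+y')\alpha_{\theta x-y'}(A)\alpha_{\theta x'}(B)\Phi$, which exhibits precisely the structure expected in the Rieffel kernel: a single outer unitary acting on a pair of $\alpha$-translates of $A$ and $B$.

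The third step is to match this expression with an analogous expansion of $(A\times_{\theta}B)_{\theta}\Phi$, obtained by inserting the oscillatory integral of $A\times_{\theta}B$ from equation~\eqref{dp0} into the oscillatory integral for the outer warped convolution. This produces a quadruple integral of the same form $\chi\chi\,e^{-i(\cdot)(\cdot)-i(\cdot)(\cdot)}U(\cdot)\alpha_{\theta(\cdot)}(A)\alpha_{(\cdot)}(B)\Phi$. A linear change of the four integration variables then identifies the outer unitary $U(y+y')$ with the unitary of the outer warped convolution, the shifted argument $\theta x - y'$ with the $\alpha_{\theta x''}(\alpha_{\theta x}(A))$-structure of the Rieffel product, and the phase $e^{-ixy-ix'y'}$ with the phases of the two nested oscillatory integrals. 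Lemma~\ref{wcl2}(i) is invoked when needed to commute spectral factors past $U$.

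The main obstacle is the rigorous handling of the iterated oscillatory integrals: applying Fubini at the regularized level $\epsilon>0$ requires absolute convergence, which rests on the smoothness $A,B\in C^{\infty}$ under $\alpha$ and on repeated integration by parts against the oscillatory phase. A further subtle point is that, because $\theta$ is merely skew-symmetric and may fail to be invertible, the change of variables must be engineered using only translations and linear combinations of the integration variables, without ever resorting to $\theta^{-1}$. One must finally check that the limit $\epsilon\to 0$ is independent of the specific cutoff, so that the product cutoff $\chi(\epsilon x,\epsilon y)\chi(\epsilon x',\epsilon y')$ may be employed and a single limit taken at the end.
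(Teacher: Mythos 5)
The paper itself offers no proof of this lemma: it is quoted from the original works on warped convolutions (the preliminaries section states explicitly that the proofs are deferred to \cite{BLS, GL1}), so there is nothing internal to compare against. Your strategy --- expand both sides as iterated regularized oscillatory integrals, push the unitaries through with the covariance relation $\alpha_{\theta x}(A)U(y')=U(y')\alpha_{\theta x-y'}(A)$, and then identify the two quadruple integrals --- is exactly the route taken in the cited references, and the covariance identity you use is correct.

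However, the decisive step is missing. Your ``third step'' asserts that ``a linear change of the four integration variables'' identifies the two kernels, but the matching you describe would require solving $\theta(x''+u)=\theta x-y'$ for the new variables, i.e.\ it needs $y'\in\mathrm{ran}\,\theta$, which fails precisely in the degenerate case you yourself flag (and which occurs in the physically relevant examples). The correct bookkeeping avoids this: expanding $(A\times_{\theta}B)_{\theta}\Phi$ gives an integrand of the form $e^{-ixy-iuv}\,U(y)\,\alpha_{\theta(x+u)}(A)\,\alpha_{\theta x+v}(B)\Phi$ (after commuting the outer automorphism inside), and one then performs only \emph{translations}, $u\mapsto u-x$ and $v\mapsto v-\theta x$, which never invoke $\theta^{-1}$; the resulting extra phase contains $\langle x,\theta x\rangle$, which vanishes by skew-symmetry, and a cross term $\langle u,\theta x\rangle$ that is absorbed by a further shift, after which the two quadruple integrals coincide term by term. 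On top of exhibiting these substitutions, one must justify that translations of integration variables and the replacement of the product cutoff $\chi(\epsilon x,\epsilon y)\chi(\epsilon x',\epsilon y')$ by a single cutoff do not change the $\epsilon\to 0$ limit; this is the cutoff-independence statement for oscillatory integrals with symbol-class integrands, which you mention as an ``obstacle'' but do not discharge. As it stands the proposal is a correct plan with the central computation and its analytic justification left open.
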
 

\subsection{Kato-Rellich Theorem}In order to make this work self-contained we write in this Section the  Kato-Rellich theorem. In particular it gives a condition on the self-adjointness of an operator $A+B$ if $A$ is self-adjoint and $B$ symmetric.   
 \begin{definition}
An operator $B$ is called $A$-\textbf{{\textbf{bounded}}} if $\mathcal{D}(A)\subseteq\mathcal{D}(B)$ and if there are constants $a,b \geq 0$ such that
 \begin{align*} 
\|B\Psi\|\leq a\|A\Psi\|+b\| \Psi\|, \qquad \Psi\in\mathcal{D}(A)
\end{align*} The infimum of such an $a$ is called the 
relative bound of $B$ with respect to $A$. If the relative bound is zero, we say 
that $B$ is infinitesimally small with respect to $A$ and write $B < < A$.  
Sometimes it is more convenient to use, instead of the former, the following inequality
 \begin{align*} 
\|B\Psi\|^2\leq \tilde{a}^2\|A\Psi\|^2+\tilde{b}^2\| \Psi\|^2, \qquad \Psi\in\mathcal{D}(A).
\end{align*}
It is easy to show that the latter inequality follows from the former and vice versa, see \cite[Chapter X.6]{RS}.
\end{definition}
\begin{theorem}\label{krt}(\textbf{Kato-Rellich Theorem})
Suppose $A$ is (essentially) self-adjoint and $B$ is symmetric with $A$-bound less than one. Then $A+B$ is  (essentially) self-adjoint on $\mathcal{D}(A+B)=\mathcal{D}(A)$ and essentially self-adjoint on any core of A. \newline\newline If $A$ is bounded from below by $\gamma$, then $A+B$ is bounded from below by
 \begin{align*} 
\gamma-\text{max}(a|\gamma|+b,\frac{b}{a-1}).
\end{align*}
\end{theorem}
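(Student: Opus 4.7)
The plan is to reduce self-adjointness of $A+B$ to a basic criterion, then prove the criterion via a resolvent estimate that uses the strict inequality $a<1$ crucially. First I would observe that $A+B$ is symmetric on $\mathcal{D}(A)$, since $\mathcal{D}(A)\subseteq\mathcal{D}(B)$ by hypothesis and both $A$, $B$ are symmetric. So $\mathcal{D}(A+B)=\mathcal{D}(A)$ makes sense, and it remains to verify the self-adjointness criterion: a symmetric operator $T$ on a dense domain is self-adjoint if and only if $\mathrm{Ran}(T\pm i\mu)=\mathscr{H}$ for some $\mu>0$.

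The key algebraic identity I would use is
\begin{equation*}
A+B\pm i\mu = \bigl(\mathds{1}+B(A\pm i\mu)^{-1}\bigr)(A\pm i\mu),
\end{equation*}
valid on $\mathcal{D}(A)$. Since $A$ is self-adjoint, $(A\pm i\mu)^{-1}$ is a bounded everywhere-defined operator, and $\mathrm{Ran}(A\pm i\mu)=\mathscr{H}$. Thus it suffices to show that $\mathds{1}+B(A\pm i\mu)^{-1}$ is boundedly invertible for some $\mu>0$. By the Neumann series this follows once I establish $\|B(A\pm i\mu)^{-1}\|<1$.

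The heart of the proof—and the main technical step—is the resolvent estimate. Using the spectral theorem for the self-adjoint operator $A$, I would bound, for any $\psi\in\mathscr{H}$ and $\varphi=(A\pm i\mu)^{-1}\psi\in\mathcal{D}(A)$,
\begin{equation*}
\|A\varphi\|\le\|\psi\|,\qquad \|\varphi\|\le\mu^{-1}\|\psi\|,
\end{equation*}
both from $|\lambda/(\lambda\pm i\mu)|\le 1$ and $|1/(\lambda\pm i\mu)|\le 1/\mu$ pointwise. Applying the relative bound $\|B\varphi\|\le a\|A\varphi\|+b\|\varphi\|$ then yields $\|B(A\pm i\mu)^{-1}\|\le a+b/\mu$. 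Because the relative bound is strictly less than one, one may choose $a<1$ with a corresponding $b$, and then pick $\mu$ large enough so that $a+b/\mu<1$. This is where the hypothesis is indispensable—if $a=1$ were allowed, no choice of $\mu$ would save us. With this estimate, $\mathds{1}+B(A\pm i\mu)^{-1}$ is invertible, so $A+B\pm i\mu$ is surjective, completing the self-adjoint case.

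For the essentially self-adjoint case I would apply the same argument with $(A\pm i\mu)$ replaced by the closure, noting that the relative bound passes to the closure, and conclude that $\mathrm{Ran}(A+B\pm i\mu)$ is dense, which is equivalent to essential self-adjointness. The statement about cores follows because if $D$ is a core of $A$, then $(A+B)\!\!\upharpoonright_D$ has the same closure as $A+B$, since the estimate $\|B\Psi\|\le a\|A\Psi\|+b\|\Psi\|$ shows that the graph norm of $A+B$ on $D$ is equivalent to that of $A$. Finally, for the lower-bound statement I would shift: replace $A$ by $A-\gamma\ge 0$, redo the resolvent estimate with a real parameter $\lambda<0$ in place of $\pm i\mu$ using $\|A(A-\lambda)^{-1}\|$ and $\|(A-\lambda)^{-1}\|$ on the spectrum $[\gamma,\infty)$, and optimize the resulting bound in $\lambda$ to read off the asserted lower bound for $A+B$.
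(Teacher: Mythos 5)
Your proof is correct and is the standard Kato--Rellich argument (symmetry of $A+B$ on $\mathcal{D}(A)$, the factorization $A+B\pm i\mu=(\mathds{1}+B(A\pm i\mu)^{-1})(A\pm i\mu)$, the spectral-theorem resolvent bounds giving $\|B(A\pm i\mu)^{-1}\|\le a+b/\mu<1$, and the Neumann series); the paper itself offers no proof of this theorem, quoting it as a preliminary from \cite[Chapter X]{RS}, so there is nothing to compare beyond noting that your route is exactly the textbook one. One small point: carrying out your final optimization over real $\lambda<\gamma$ actually produces the lower bound $\gamma-\max(a|\gamma|+b,\;b/(1-a))$; the denominator $a-1$ in the paper's statement is a sign typo, and as printed (with $a<1$) that term is negative and the stated bound silently reduces to the possibly false claim $A+B\ge\gamma-(a|\gamma|+b)$, so you should prove and cite the $b/(1-a)$ version.
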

The former theorem has an extension to the case of relative bound one, \cite[Theorem X.14]{RS}.
\begin{theorem}\label{wt}(\textbf {W\"ust's Theorem})
Let $A$ be  self-adjoint and $B$ be symmetric with  $\mathcal{D}(A)\subseteq\mathcal{D}(B)$. Suppose that for some $b$ and all $\Psi\in\mathcal{D}(A)$, 
 \begin{align*}  
\|B\Psi\| \leq  \|A\Psi\| +{b} \| \Psi\| .
\end{align*}
 Then $A+B$ is \textbf{essentially} self-adjoint on $\mathcal{D}(A)$ or any core for $A$.

\end{theorem}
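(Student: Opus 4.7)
The strategy is to apply Theorem~\ref{krt} to the one-parameter family $A + tB$ for $t\in[0,1)$. Since $tB$ has relative bound $t < 1$ with respect to $A$, each $A+tB$ is essentially self-adjoint on $\mathcal{D}(A)$; denote its self-adjoint closure by $H_t$, so $\mathcal{D}(H_t)=\mathcal{D}(A)$. The aim is to conclude essential self-adjointness of $A+B$ by a controlled passage to the limit $t\to 1^-$.

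To establish essential self-adjointness it suffices to verify that for some $\lambda>0$ the deficiency subspace $\ker((A+B)^* - i\lambda)$ is trivial. Suppose for contradiction that there exists a nonzero $\phi$ with $(A+B)^*\phi = i\lambda\phi$. The bounded resolvent of the self-adjoint $H_t$ provides $\psi_t := (H_t + i\lambda)^{-1}\phi \in \mathcal{D}(A)$ satisfying $(A+tB+i\lambda)\psi_t = \phi$. Symmetry of $A+tB$ yields the Pythagorean identity
\[
\|\phi\|^2 \;=\; \|(A+tB)\psi_t\|^2 + \lambda^2\|\psi_t\|^2,
\]
hence $\|\psi_t\|\le\|\phi\|/\lambda$ and $\|(A+tB)\psi_t\|\le\|\phi\|$. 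Combining these with $\|B\Psi\|\le\|A\Psi\|+b\|\Psi\|$ gives the uniform estimate $(1-t)\|B\psi_t\|\le\|\phi\|(1+b/\lambda)$. Pairing $(A+tB+i\lambda)\psi_t = \phi$ with $\phi$ and using the eigenvalue relation produces the scalar identity
\[
\|\phi\|^2 \;=\; -(1-t)\langle\phi,\, B\psi_t\rangle.
\]

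The main obstacle is that Cauchy-Schwarz combined with the two estimates above only yields the tautology $\|\phi\|\le\|\phi\|(1 + b/\lambda)$ --- precisely the borderline behavior at relative bound one that causes Theorem~\ref{krt} to fail. To close the argument I would refine the passage to the limit: via Banach-Alaoglu applied to the uniformly bounded families $\{\psi_t\}$ and $\{(1-t)B\psi_t\}$, extract weakly convergent subsequences $\psi_{t_n}\rightharpoonup\psi$ and $(1-t_n)B\psi_{t_n}\rightharpoonup\eta$ as $t_n\to 1^-$, then combine these weak limits with Mazur's lemma and closability of $A+B$ to identify $\psi\in\mathcal{D}(\overline{A+B})$ together with a specific relation among $\psi$, $\phi$, and $\eta$. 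The final delicate step, which extracts the contradiction from this relation using the orthogonality of $\phi$ to $\mathrm{Ran}(\overline{A+B}+i\lambda)$ and the scalar identity above, constitutes the heart of W\"ust's original argument and is carried out in full detail in~\cite[Theorem~X.14]{RS}.
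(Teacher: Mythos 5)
First, a point of comparison: the paper does not prove this theorem at all. It is quoted as a known result, with the proof delegated to \cite[Theorem X.14]{RS}, so there is no in-paper argument to measure yours against.

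As a standalone proof, your proposal has a genuine gap, and you name it yourself. Everything you actually write out is correct and is the standard opening of W\"ust's argument: Kato--Rellich applied to $A+tB$ for $t<1$ (note that since $A$ is assumed self-adjoint, $H_t=A+tB$ is self-adjoint on $\mathcal{D}(A)$, so $\psi_t=(H_t+i\lambda)^{-1}\phi$ exists), the Pythagorean identity, the uniform bounds $\|\psi_t\|\le\|\phi\|/\lambda$, $\|(A+tB)\psi_t\|\le\|\phi\|$, $(1-t)\|B\psi_t\|\le C$, and the identity $\|\phi\|^2=-(1-t)\langle\phi,B\psi_t\rangle$. You also correctly diagnose why Cauchy--Schwarz cannot close the argument at relative bound one. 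But the finishing move you gesture at --- Banach--Alaoglu, Mazur, closability of $A+B$ --- does not by itself produce the contradiction. Extracting weak limits $\psi_{t_n}\rightharpoonup\psi$ and $(1-t_n)B\psi_{t_n}\rightharpoonup\eta$ and using weak closedness of the graph of $\overline{A+B}$ yields $(\overline{A+B}+i\lambda)\psi=\phi+\eta$ and hence $\langle\phi,\phi+\eta\rangle=0$, i.e.\ $\langle\phi,\eta\rangle=-\|\phi\|^2$ --- which is exactly the scalar identity you already had, restated in the limit, and is perfectly consistent with $\phi\neq0$. The contradiction appears only once one shows $\langle\phi,\eta\rangle=0$ (in W\"ust's argument, by showing $\eta=0$, which requires a quantitative estimate on $(1-t)B\psi_t$ strictly better than uniform boundedness, obtained from a refined expansion of $\|(A+tB)\psi_t\|^2$ together with weak lower semicontinuity of the norm). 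That estimate is the heart of the theorem, and your proposal replaces it with a citation to \cite[Theorem X.14]{RS}. If the intent is merely to invoke the theorem as a black box --- which is all the paper itself does --- the citation suffices; as a proof, the decisive step is missing.
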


 \section{Deformations in QM}\label{c3}
In this Section we  study the self-adjointness of the minimal substituted Hamiltonian which is obtained by using warped convolutions in a QM context, \cite{Muc1}.  Let us first shortly summarize the content of this work.
We study deformations of the Laplace operator, i.e. the Hamiltonian of a non-interacting and non-relativistic  particle. The
generators of the deformation in the context of warped convolutions are chosen  
to be vector-valued functions of the coordinate operator. For the concrete proofs of the following lemmas and theorems we refer the reader to the original work. The deformation of the Hamiltonian is performed by working in the
standard realization of quantum mechanics, the so called \textbf{Schr\"odinger
representation}, \cite{BEH, RS1}. The operators 
$(P_{j},X_{k})$ satisfying the canonical commutation relations, are given in this representation  as essentially self-adjoint operators on the dense domain
$\mathscr{S}(\mathbb{R}^n)$. Here $P_{j}$ and $X_{k}$ are the closures of
$i {\partial}/{\partial x^j}$ and multiplication by $x_{k}$ on
$\mathscr{S}(\mathbb{R}^n)$, respectively.  
To apply the definition of  warped convolutions, we  demand to have self-adjoint operators that commute along their components. For this purpose we start with the following theorem, \cite[ Theorem VIII.6]{ RS1}.
\\
\begin{theorem}\label{tsa}
Let $\mathbf{Q}(.)$ be an unbounded real vector-valued Borel function on  $\mathbb{R}^n $ and let the dense domain $D_{\mathbf{Q}}$ be given as, 
\begin{equation*}
D_{\mathbf{Q}}=\{\phi| \int\limits_{ -\infty}^{\infty} |Q_j(\mathbf{x})|^2\,d(\phi,P_\mathbf{x}\phi) <\infty, \, \quad j=1,\dots, n\},
\end{equation*}
where $\{P_x\}$ are projection valued measures on $\mathscr{H}$. Then, $\mathbf{Q}(\mathbf{X})$ defined on $D_{\mathbf{Q}}$ is a self-adjoint operator. 
\end{theorem}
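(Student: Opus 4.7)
The plan is to invoke the joint spectral theorem for the strongly commuting self-adjoint operators $X_{1},\dots,X_{n}$ and then to apply the Borel functional calculus component-wise. First I would observe that in the Schr\"odinger representation each $X_{k}$ is multiplication by $x_{k}$ on $L^{2}(\mathbb{R}^{n})$, so the $X_{k}$ are pairwise strongly commuting self-adjoint operators and therefore admit a joint projection-valued measure $\{P_{\mathbf{x}}\}$ on $\mathbb{R}^{n}$ with $X_{k}=\int x_{k}\,dP_{\mathbf{x}}$. This is exactly the PVM appearing in the statement of the theorem.

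Next, for each real Borel function $Q_{j}$, I would define
\[
Q_{j}(\mathbf{X})=\int Q_{j}(\mathbf{x})\,dP_{\mathbf{x}}
\]
on the natural domain $D_{j}=\{\phi:\int |Q_{j}(\mathbf{x})|^{2}\,d(\phi,P_{\mathbf{x}}\phi)<\infty\}$, and verify the three standard ingredients of the unbounded functional calculus: (a) $D_{j}$ is dense, since for any $\phi\in\mathscr{H}$ the cut-off vector $P_{\mathbf{x}}(\{|Q_{j}|\le k\})\phi$ lies in $D_{j}$ and converges to $\phi$ in norm; (b) $Q_{j}(\mathbf{X})$ is symmetric on $D_{j}$, because $Q_{j}$ is real-valued; (c) the operators $Q_{j}(\mathbf{X})\pm i\mathbf{1}$ are bijective from $D_{j}$ onto $\mathscr{H}$, with bounded inverses $\int(Q_{j}(\mathbf{x})\pm i)^{-1}\,dP_{\mathbf{x}}$ supplied by the bounded functional calculus. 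By the basic criterion for self-adjointness, symmetry together with surjectivity of $A\pm i\mathbf{1}$, each $Q_{j}(\mathbf{X})$ is self-adjoint on $D_{j}$.

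To conclude, I would pass from $D_{j}$ to the common domain $D_{\mathbf{Q}}=\bigcap_{j}D_{j}$. The joint cut-off $P_{\mathbf{x}}(\{|\mathbf{Q}|\le k\})\phi$ lies in every $D_{j}$ simultaneously and converges to $\phi$ in the graph norm of each $Q_{j}(\mathbf{X})$, by dominated convergence applied to the scalar spectral measure $d(\phi,P_{\mathbf{x}}\phi)$. Hence $D_{\mathbf{Q}}$ is dense and is a common core for every component, so the vector-valued operator $\mathbf{Q}(\mathbf{X})$ is self-adjoint component-wise on $D_{\mathbf{Q}}$.

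The principal obstacle, and the only step that requires real care, is item (c): one must verify that the bounded functional calculus really does invert $Q_{j}(\mathbf{X})\pm i\mathbf{1}$ and that its range lands in $D_{j}$. This reduces to multiplicativity of the calculus on bounded Borel functions, $\int f\,dP_{\mathbf{x}}\,\int g\,dP_{\mathbf{x}}=\int fg\,dP_{\mathbf{x}}$, combined with a monotone truncation argument for $Q_{j}$. All of this is precisely the content of the cited Reed--Simon Theorem~VIII.6, which I would invoke rather than reprove in full.
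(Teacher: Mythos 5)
Your proposal is correct: it is the standard spectral-theorem proof, with the density of the domain via truncation, symmetry from realness of $Q_j$, and self-adjointness from surjectivity of $Q_j(\mathbf{X})\pm i\mathbf{1}$ supplied by the bounded functional calculus, plus the observation that the joint cut-offs make $D_{\mathbf{Q}}=\bigcap_j D_j$ a common core. The paper itself offers no proof of this statement --- it is quoted verbatim as Reed--Simon, Theorem VIII.6 --- and your argument is precisely the component-wise extension of the proof of that cited result, so the two treatments agree in substance.
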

\begin{definition}  
Let $B$ be a real skew-symmetric matrix on $\mathbb{R}^{n}$ and 
let $\chi \in\mathscr{S}(\mathbb{R}^n\times\mathbb{R}^n)$ with $\chi(0,0)=1$. Moreover,  let 
$\mathbf{Q}(\mathbf{X})$ be given as in Theorem \ref{tsa}. Then, the warped convolutions of an   operator $A$ with  operator $\mathbf{Q}$, denoted as 
$A_{B,\mathbf{Q} }$  are
defined, in the same manner as in \cite{BLS}, namely
\begin{equation}\label{defcop}
A_{B,\mathbf{Q}  } := (2\pi)^{-n}
\lim_{\epsilon\rightarrow 0}
  \iint  \, d x \,  d y \, e^{-ixy}  \, \chi(\epsilon x,\epsilon y)V(y)
\alpha_{B x}(A) . 
\end{equation}
The automorphisms $\alpha$ are implemented by the adjoint action of the strongly
continuous unitary representation $V(x)=e^{ix_kQ^k}$ of $\mathbb{R}^n$ given by  
\begin{equation*}
 \alpha_{x}(A)=V(x)\,A\,V(x)^{-1}, \quad x \in \mathbb{R}^n.
\end{equation*} 
\end{definition} 
\subsection{Deforming the Hamiltonian}
For a non-interacting and non-relativistic quantum-mechanical particle the energy is described by the Hamiltonian given as
\begin{equation}\label{fh}
H_{0}= -\frac{P_jP^j}{2m}.
\end{equation} Due to the physical relevance, we restrict the
deformation to three space dimensions.  
The domain of self-adjointness
and the spectrum of  the free undeformed Hamiltonian $H_{0}$ (\cite{T}) is given in the following theorem.
\newline  
\begin{theorem}
 The free Schr\"odinger operator $H_{0}$ is self-adjoint on the domain
$\mathcal{D}(H_{0})$ given as 
\begin{equation*}
\mathcal{D}(H_{0})=H^2(\mathbb{R}^3)=\{\varphi \in
L^2(\mathbb{R}^3)||\textbf{P}|^2\varphi\in
L^2(\mathbb{R}^3)\},
\end{equation*}
and its spectrum is characterized by $\sigma(H_{0})=[0,\infty)$.
\end{theorem}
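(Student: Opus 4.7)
The plan is to reduce the statement to an elementary multiplication-operator fact via the Fourier transform, since $H_0$ is a constant-coefficient differential operator with real symbol. The whole argument is standard, so I would organize it as a short unitary-equivalence calculation followed by spectral identification.

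First I would observe that the Fourier transform $\mathcal{F}\colon L^2(\mathbb{R}^3)\to L^2(\mathbb{R}^3)$ is unitary and, by the definition of $P_j$ as the closure of $i\,\partial/\partial x^j$ on $\mathscr{S}(\mathbb{R}^3)$, it conjugates each $P_j$ into multiplication by $-k_j$ on the dense core $\mathscr{S}(\mathbb{R}^3)$, and hence (by taking closures) on the maximal multiplication domain. Consequently $H_0$ is unitarily equivalent to the multiplication operator $M_f$ on $L^2(\mathbb{R}^3)$ with symbol $f(\mathbf{k})= |\mathbf{k}|^2/(2m)$ (absorbing the sign convention of the paper), acting on the natural domain $\{\hat{\varphi}\in L^2 : f\hat{\varphi}\in L^2\}$. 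Undoing $\mathcal{F}$ identifies this domain with $H^2(\mathbb{R}^3)=\{\varphi\in L^2 : |\mathbf{P}|^2\varphi\in L^2\}$, matching the stated $\mathcal{D}(H_0)$.

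Second, I would invoke the standard lemma that multiplication by a real measurable function $f\colon\mathbb{R}^3\to\mathbb{R}$ defines a self-adjoint operator on $\{\psi\in L^2 : f\psi\in L^2\}$: symmetry is immediate from $\overline{f}=f$, and self-adjointness follows by a direct computation of the adjoint, checking that $\psi\in\mathcal{D}(M_f^*)$ forces $f\psi\in L^2$ via a cut-off argument $\chi_{\{|f|\le N\}}\psi$. Unitary equivalence under $\mathcal{F}$ then transports self-adjointness back to $H_0$ on $H^2(\mathbb{R}^3)$.

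For the spectrum, I would use the fact that for a real continuous $f$ the spectrum of $M_f$ coincides with the essential range $\mathrm{ess\,ran}(f)$, which here equals the topological closure of $f(\mathbb{R}^3)=[0,\infty)$. Explicitly, for any $\lambda\ge 0$ one constructs a Weyl sequence supported in $\{\mathbf{k} : |f(\mathbf{k})-\lambda|<1/n\}$ (a set of positive Lebesgue measure), proving $\lambda\in\sigma(H_0)$; conversely $f\ge 0$ together with the positivity of $\langle\varphi,H_0\varphi\rangle=\int f|\hat{\varphi}|^2\,d\mathbf{k}\ge 0$ rules out any $\lambda<0$. Hence $\sigma(H_0)=[0,\infty)$. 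There is no genuine obstacle in this proof; the only subtlety worth flagging is a careful treatment of the Fourier transform of the unbounded operator $|\mathbf{P}|^2$ (one defines it first on $\mathscr{S}$ and identifies its closure with $M_f$), which is routine and precisely why the paper is content to cite Teschl.
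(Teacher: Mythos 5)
Your proof is correct and is exactly the standard Fourier-transform/multiplication-operator argument; the paper itself gives no proof of this theorem, simply citing Teschl, and your write-up is precisely the argument found in that reference (diagonalize $H_0$ via $\mathcal{F}$, invoke self-adjointness of multiplication by a real function on its maximal domain, identify the spectrum with the essential range $[0,\infty)$). The only point worth keeping an eye on is the paper's slightly unusual sign conventions ($P_j$ the closure of $i\,\partial/\partial x^j$ and $H_0=-P_jP^j/2m$), which you correctly absorb so that the symbol is $|\mathbf{k}|^2/2m\ge 0$.
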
$\,$\newline
  For the purpose of investigating the validity of the deformation Formula (\ref{defcop}) for unbounded operators, a dense domain $ \mathcal{E}\subseteq \mathscr{S}(\mathbb{R}^3)$  that fulfills additional requirements is introduced.  
\begin{lemma}\label{cnt0}
Consider the self-adjoint operator 

\begin{equation}\label{hs0}
 \mathbf{Q}(\mathbf{X}) =\mathbf{X}/ \vert\mathbf{X}\vert^{ n}, \qquad    n\in\mathbb{R}.
\end{equation}
 Then,  for all $n\in\mathbb{R}$  there exists a dense domain $ \mathcal{E}\subseteq \mathscr{S}(\mathbb{R}^3)$ such that
\begin{align} \label{hs}
\Vert 
\{P_{j},[\mathbf{Q},P^j]\}  \Phi \Vert<\infty, \qquad 
\Vert 
[\mathbf{Q},P_j][\mathbf{Q},P^j]  \Phi \Vert<\infty, \qquad \Phi\in  \mathcal{E} .
\end{align}
\end{lemma}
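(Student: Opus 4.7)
The plan is to establish the lemma through an explicit computation of $[\mathbf{Q},P^j]$, which collapses to a multiplication operator, followed by the choice of a dense subspace $\mathcal{E}$ of $\mathscr{S}(\mathbb{R}^3)$ on which the only singularity---located at the origin---is excised.

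Using the Schr\"odinger realization $P_j=i\partial_j$ stated in the text, the Leibniz rule yields
\begin{equation*}
[Q^k,P^j]=-i\,\partial_j\!\Big(\tfrac{X^k}{|\mathbf{X}|^n}\Big)=-i\,\frac{\delta^{kj}}{|\mathbf{X}|^n}+in\,\frac{X^kX^j}{|\mathbf{X}|^{n+2}},
\end{equation*}
so that the componentwise commutator $[\mathbf{Q},P^j]$ is a pure multiplication operator, smooth on $\mathbb{R}^3\setminus\{0\}$. A short expansion of the anticommutator gives
\begin{equation*}
\{P_j,[Q^k,P^j]\}\,\Phi \;=\; 2i\,[Q^k,P^j]\,\partial_j\Phi\;+\;i\bigl(\partial_j[Q^k,P^j]\bigr)\Phi,
\end{equation*}
which decomposes into a first-order differential term with coefficients of order $|\mathbf{X}|^{-n}$ plus a multiplicative term of order $|\mathbf{X}|^{-n-2}$. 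The second quantity $[\mathbf{Q},P_j][\mathbf{Q},P^j]$ is itself a pure multiplication operator whose worst singularity is of order $|\mathbf{X}|^{-2n}$.

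With this structure fixed I would set
\begin{equation*}
\mathcal{E}:=C_c^{\infty}\bigl(\mathbb{R}^3\setminus\{0\}\bigr)\;\subset\;\mathscr{S}(\mathbb{R}^3).
\end{equation*}
Density of $\mathcal{E}$ in $L^2(\mathbb{R}^3)$ is standard: since $\{0\}$ has Lebesgue measure zero, any Schwartz function is approximated in $L^2$-norm by multiplication with an inner cut-off shrinking around the origin and an outer cut-off expanding to infinity, the conclusion coming from dominated convergence. For any $\Phi\in\mathcal{E}$ the support is compact and bounded away from the origin, so every negative power $|\mathbf{X}|^{-m}$ is uniformly bounded on $\mathrm{supp}\,\Phi$; combined with $\partial_j\Phi\in C_c^{\infty}(\mathbb{R}^3\setminus\{0\})\subset L^2(\mathbb{R}^3)$, both bounds in (\ref{hs}) follow at once, for every $n\in\mathbb{R}$, and trivially in the regular regime $n\leq 0$, where all of $\mathscr{S}(\mathbb{R}^3)$ would in fact suffice.

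The main obstacle is conceptual rather than computational: one must ensure that $\mathcal{E}$ remains large enough for the rest of the program, namely that it is invariant under the Weyl-type unitaries $V(x)=e^{ix_kQ^k}$ appearing in (\ref{defcop}), so that the warped convolution integrals can subsequently be defined on $\mathcal{E}$, and that it is a core for the undeformed Hamiltonian $H_0$ used in the later applications of the Kato--Rellich theorem. Both properties follow from the smoothness of $\mathbf{Q}$ away from the origin and the standard density statements for $C_c^{\infty}(\mathbb{R}^3\setminus\{0\})$, but they are where a fully rigorous argument, as opposed to a formal finiteness check, would require genuine care.
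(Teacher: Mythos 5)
Your argument does prove the lemma as literally stated, but by a genuinely different route from the paper. The paper does not take $\mathcal{E}=C_c^{\infty}(\mathbb{R}^3\setminus\{0\})$; immediately after the lemma it fixes $\mathcal{E}$ (for $n>0$) to be the linear hull of the Hermite-type vectors $x_1^{k_1}x_2^{k_2}x_3^{k_3}e^{-|\mathbf{x}|^2/2}$, dense by \cite[Theorem 3.2.5]{Th}, and handles the singular coefficients $|\mathbf{x}|^{-m}$ at the origin by invoking the Gelfand--Shilov regularization of the integrals $\int|\mathbf{x}|^{\lambda}\varphi(\mathbf{x})\,d\mathbf{x}$ \cite[Chapter 1, Section 3.6]{GS1}, rather than by excising the origin. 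Your choice buys a much shorter finiteness argument, uniform in $n$, since every negative power of $|\mathbf{X}|$ is bounded on the support of $\Phi$; the paper's choice buys the properties that are actually consumed downstream: invariance under $P_j$ and under multiplication by polynomials, and the fact that the linear hull of these vectors is a set of analytic vectors for $\mathbf{P}$ and $H_0$, hence a core, which is what the Kato--Rellich and W\"ust arguments in Section 3.2 require.

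This is where your closing paragraph contains a genuine error rather than just a deferred check: $C_c^{\infty}(\mathbb{R}^3\setminus\{0\})$ is \emph{not} a core for $H_0=-\Delta$ in three dimensions. The restriction of $-\Delta$ to $C_c^{\infty}(\mathbb{R}^n\setminus\{0\})$ is essentially self-adjoint only for $n\geq 4$; for $n=3$ it has deficiency indices $(1,1)$ and a one-parameter family of self-adjoint extensions (the point interactions), so the core property emphatically does not ``follow from the standard density statements.'' Since the lemma only asks for a dense domain on which the two norms in (\ref{hs}) are finite, your proof of the statement itself stands; but if this $\mathcal{E}$ were carried into the later theorems, the Kato--Rellich conclusion that $(H_0)_{B,\mathbf{Q}}$ is essentially self-adjoint on $\mathcal{E}$ would no longer be justified. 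To keep the rest of the program intact you would either need to enlarge $\mathcal{E}$ back to a genuine core of $H_0$ (and then redo the finiteness estimates near the origin, which is exactly the work the paper's Gelfand--Shilov argument performs), or adopt the paper's Hermite-hull domain from the start.
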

For $n\in \mathbb{R}^{+}$ we consider  the domain $\mathcal{E}$  which denotes the linear hull of the \textit{dense} vectors \cite[Theorem 3.2.5]{Th}
\begin{align*}
 \Phi(\mathbf{x})=x_1^{k_1}x_2^{k_2}x_3^{k_3}\,\exp{(-\frac{|\mathbf{x}|^2}{2} )},\qquad k_i=0,\,1,\,2,\,\dots.
\end{align*}  

\begin{proposition}\label{wcfh}
Let 
$\mathbf{Q}(\mathbf{X})$ be a self-adjoint  operator of the form
\begin{align*}
 \mathbf{Q}(\mathbf{X}) =\mathbf{X}/ \vert\mathbf{X}\vert^{ n}, \qquad    n\in\mathbb{R},
\end{align*}
and let $(H_{0})_{B,\mathbf{Q}}$ denote the deformed free Hamiltonian (see Formula (\ref{defcop})). Then, the deformation formula  for the unbounded operator
$H_{0}$, given as an oscillatory
integral, is   well-defined and the explicit result of the deformation is  
\begin{equation}\label{defh}
(H_{0})_{B,\mathbf{Q}}\Phi=
-\frac{1}{2m}\left
(P_j+i(B  Q )^k [Q_k,P_j]  \right)\left(P^j+i(B  Q )^r [Q_r,P^j]  \right)
\Phi,
\end{equation}
 where $\Phi \in \mathcal{E}\subseteq \mathscr{S}(\mathbb{R}^3)$.
\end{proposition}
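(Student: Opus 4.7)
The plan is to first compute the adjoint action $\alpha_{Bx}(H_0) = V(Bx) H_0 V(Bx)^{-1}$ in closed form, then substitute this into the oscillatory integral defining $(H_0)_{B,\mathbf{Q}}$, and finally use the spectral representation of $V$ to extract the claimed expression.

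First I would apply the Baker--Campbell--Hausdorff expansion to
\[
V(Bx)\, P_j\, V(Bx)^{-1} = P_j + \bigl[i(Bx)_k Q^k, P_j\bigr] + \tfrac{1}{2!}\bigl[i(Bx)_k Q^k,[i(Bx)_r Q^r, P_j]\bigr] + \cdots,
\]
and observe that since $\mathbf{Q}(\mathbf{X}) = \mathbf{X}/|\mathbf{X}|^n$ depends only on the position operator, each commutator $[Q^r, P_j]$ is itself a smooth function of $\mathbf{X}$ and therefore commutes with every $Q^k$. The series truncates after the linear term, giving $\alpha_{Bx}(P_j) = P_j + i(Bx)^k[Q_k, P_j]$. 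Since $H_0$ is a polynomial in the momenta and $\alpha_{Bx}$ is an algebra homomorphism,
\[
\alpha_{Bx}(H_0) = -\frac{1}{2m}\bigl(P_j + i(Bx)^k[Q_k, P_j]\bigr)\bigl(P^j + i(Bx)^r[Q_r, P^j]\bigr),
\]
which is a polynomial of degree two in $x$ with position-operator coefficients.

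Next I would establish well-definedness of the regularized integral. Inserting the above expression into (\ref{defcop}) produces an integrand in which the polynomial growth in $x$ must be tamed by the oscillatory factor $e^{-ixy}$. Here I would apply the standard trick $x^\alpha e^{-ixy} = (i\partial_y)^\alpha e^{-ixy}$ followed by integration by parts in $y$; derivatives landing on $V(y)\Phi$ generate additional factors $iQ^k$ acting on $\Phi$, while those landing on the cutoff $\chi(\epsilon x, \epsilon y)$ carry powers of $\epsilon$ that disappear in the limit. What remains is an absolutely convergent integrand whose norm is controlled by the quantities
\[
\bigl\|\{P_j,[\mathbf{Q},P^j]\}\Phi\bigr\|,\qquad \bigl\|[\mathbf{Q},P_j][\mathbf{Q},P^j]\Phi\bigr\|,\qquad \|P_j P^j \Phi\|,
\]
which are finite on $\mathcal{E}$ by Lemma \ref{cnt0} together with the inclusion $\mathcal{E}\subset \mathscr{S}(\mathbb{R}^3)\subset \mathcal{D}(H_0)$.

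Having justified convergence, I would finally invoke Lemma \ref{wcl2}(i) to rewrite $(H_0)_{B,\mathbf{Q}}\Phi = \int dE(x)\,\alpha_{Bx}(H_0)\,\Phi$, where $E$ is the spectral resolution of $\mathbf{Q}(\mathbf{X})$. Since $\alpha_{Bx}(H_0)$ depends polynomially on $x$, the spectral integral merely substitutes $x \mapsto \mathbf{Q}(\mathbf{X})$ into the closed-form expression above, yielding Formula (\ref{defh}). The main obstacle, as anticipated, is not the algebraic step but the verification that the quadratic growth in $x$ of $\alpha_{Bx}(H_0)$ is absorbed after integration by parts; the two norm bounds in Lemma \ref{cnt0} were tailored precisely to control the linear and quadratic cross-terms that arise, so the remaining work is careful bookkeeping of which operator products appear and matching them term by term to the assumed finite norms on $\mathcal{E}$.
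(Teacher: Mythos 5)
Your strategy is the right one and matches the intended route: truncate the BCH expansion of $\alpha_{Bx}(P_j)$ after the linear term because $[Q^k,P_j]$ is again a function of $\mathbf{X}$ and hence commutes with every $Q^r$, use the automorphism property to write $\alpha_{Bx}(H_0)$ as a quadratic polynomial in $x$, tame the polynomial growth via $x^{\alpha}e^{-ixy}=(i\partial_y)^{\alpha}e^{-ixy}$ and integration by parts, and control the surviving terms by exactly the two norms supplied by Lemma \ref{cnt0} (the paper itself only quotes the statement and defers the proof to the original reference, but Lemma \ref{cnt0} and the anticommutator expansion that follows make clear this is the intended argument).

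There is, however, one step where your proof as written does not yet yield the claimed formula: the final assertion that the spectral integral ``merely substitutes $x\mapsto\mathbf{Q}(\mathbf{X})$''. Carrying out the $x$- and $y$-integrations on a monomial $x^{\alpha}M_{\alpha}$ in the representation (\ref{defcop}), where $V(y)$ stands to the left of $\alpha_{Bx}(A)$, produces $\mathbf{Q}^{\alpha}M_{\alpha}$ with $\mathbf{Q}^{\alpha}$ to the \emph{left} of the operator coefficient. Applied to the cross term $i(Bx)^{r}P_j[Q_r,P^j]$ this gives $i(BQ)^{r}P_j[Q_r,P^j]$, whereas Formula (\ref{defh}) contains $iP_j(BQ)^{r}[Q_r,P^j]$; the two differ by a term proportional to $B^{rk}[Q_k,P_j][Q_r,P^j]$. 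This difference vanishes, but only because by (\ref{hs2}) the commutator $[Q_k,P_j]$ is proportional to a matrix symmetric in $(j,k)$, so that $[Q_k,P_j][Q_r,P^j]$ is symmetric in $(k,r)$ and its contraction with the skew-symmetric $B$ is zero. You must include this verification (or symmetrize using the two equivalent orderings in Lemma \ref{wcl2}(i)); for a generic vector-valued $\mathbf{Q}$ the naive substitution would actually give a different operator, so the step is not cosmetic. A smaller point: Lemma \ref{wcl2} is stated for bounded smooth elements, so you should not invoke it for $H_0$ before the well-definedness you are in the middle of establishing; it is cleaner to work with the regularized integral throughout and only identify the limit at the end.
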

The outcome of   deforming the Hamiltonian with vector-valued functions of the coordinate operator is  the so called \textbf{minimal
substitution}.  
 \newline\newline
In the  next proposition we give the resulting deformation of  the momentum operator.  
\begin{proposition} The deformation of the unbounded momentum operator, given as an oscillatory
integral, is well-defined.  Moreover,  the explicit result of the deformation is given as
\begin{equation}\label{defp}
 {P}^j_{B,\mathbf{Q}}\Phi=\left(
 P^j+i(B  Q )^k [Q_k,P^j]  \right)
\Phi,  \qquad  \forall   \Phi \in \mathcal{E}\subseteq \mathscr{S}(\mathbb{R}^3).
\end{equation}
 
\end{proposition}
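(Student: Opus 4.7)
The plan is to follow the strategy that led to Proposition \ref{wcfh}, with the significant simplification that the Baker--Campbell--Hausdorff series for $\alpha_{Bx}(P^j)$ truncates after the first-order term, so the oscillatory integral defining $P^j_{B,\mathbf{Q}}$ reduces to two elementary pieces.

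The first step is to compute $\alpha_{Bx}(P^j)=V(Bx)\,P^j\,V(Bx)^{-1}$ on $\mathcal{E}$. Because each component $Q^k=X^k/|\mathbf{X}|^n$ acts as a multiplication operator in the Schr\"odinger representation, one has $[Q^k,Q^l]=0$; furthermore $[Q^k,P^j]$ is again a function of $\mathbf{X}$ alone, whence $[Q^l,[Q^k,P^j]]=0$. The BCH expansion therefore collapses to
\begin{equation*}
\alpha_{Bx}(P^j)\Phi=P^j\Phi+i(Bx)_k[Q^k,P^j]\Phi,\qquad \Phi\in\mathcal{E},
\end{equation*}
so the integrand of (\ref{defcop}) is affine in $x$ with operator coefficients that map $\mathcal{E}$ into $\mathscr{H}$.

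Substituting this expression into (\ref{defcop}) splits $P^j_{B,\mathbf{Q}}\Phi$ into two regularised pieces,
\begin{equation*}
\mathcal{I}_1(\epsilon)=(2\pi)^{-n}\iint dx\,dy\,e^{-ixy}\chi(\epsilon x,\epsilon y)V(y)P^j\Phi,
\end{equation*}
\begin{equation*}
\mathcal{I}_2(\epsilon)=i(2\pi)^{-n}\iint dx\,dy\,e^{-ixy}\chi(\epsilon x,\epsilon y)(Bx)_k V(y)[Q^k,P^j]\Phi.
\end{equation*}
In $\mathcal{I}_1$ the operator part is $x$-independent, so by Fourier inversion together with $\chi(0,0)=1$ the limit $\epsilon\to 0$ yields $P^j\Phi$, exactly as in the constant-symbol identity $A_{B,\mathbf{Q}}\Phi=A\Phi$ when $\alpha_{Bx}(A)=A$. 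For $\mathcal{I}_2$ I would write $x_l e^{-ixy}=i\,\partial_{y_l}e^{-ixy}$ and integrate by parts in $y$: the term produced by $\partial_{y_l}\chi(\epsilon x,\epsilon y)$ carries an extra factor $\epsilon$ and vanishes in the limit, while $\partial_{y_l}V(y)=iQ_l V(y)$ pulls a factor $Q_l$ out to the left. The residual $x$-free oscillatory integral then evaluates by the same Fourier-inversion identity as in $\mathcal{I}_1$, producing $i(BQ)^k[Q_k,P^j]\Phi$. Summing the two contributions yields (\ref{defp}).

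The main obstacle is justifying the limit, the integration by parts, and the removal of the cutoff $\chi$ on the dense domain $\mathcal{E}$. One must check that both $P^j\Phi$ and $(BQ)^k[Q_k,P^j]\Phi$ lie in $\mathscr{H}$: the former is immediate from $\mathcal{E}\subseteq\mathscr{S}(\mathbb{R}^3)$, while the latter reduces to an $L^2$-estimate on $(x^l/|\mathbf{x}|^n)\,\partial_j(x^k/|\mathbf{x}|^n)$ against Gaussian-polynomial wavefunctions and is controlled by the second norm bound in (\ref{hs}) of Lemma \ref{cnt0}. Once these bounds are in place, $\alpha_{Bx}(P^j)\Phi$ is a first-order symbol with $\mathscr{H}$-valued coefficients, the standard oscillatory-integral calculus of \cite{BLS} applies, and the limit in (\ref{defcop}) exists and equals (\ref{defp}), establishing simultaneously the well-definedness and the explicit form.
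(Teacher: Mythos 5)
Your argument is correct and follows exactly the route the paper intends (the paper defers the proof to \cite{Muc1}, but the surrounding material --- the truncating adjoint action $\alpha_{Bx}(P^j)=P^j+i(Bx)_k[Q^k,P^j]$, the Fourier-inversion/integration-by-parts evaluation of the regularised integral, and the appeal to Lemma \ref{cnt0} for the $L^2$-bounds on the $\mathbf{X}$-dependent coefficient --- is precisely this computation). The only minor imprecision is that the norm in (\ref{hs}) controlling $(BQ)^k[Q_k,P^j]\Phi$ is of a slightly different homogeneity than the one you cite, but the same \cite{GS1}-type integrability argument covers it, so this is not a gap.
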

In order to settle the question of arbitrariness in the definition of the deformed free Hamiltonian the next theorem is essential. The arbitrariness stems from the view point that the deformed Hamiltonian could  be defined as the scalar product of the deformed momentum operators.

\begin{theorem}\label{d1}
The scalar product of the deformed momentum vectors is equal to the deformed free Hamiltonian (see Equation \ref{defh}),
i.e. 
\begin{equation*}
 (H_{0})_{B,\mathbf{Q}}\Psi=-\frac{1}{2m}P_j^{B,\mathbf{Q}}P_{B,\mathbf{Q}}^j
\Psi, \qquad \Psi \in\mathcal{E}\subseteq \mathscr{S}(\mathbb{R}^3).
\end{equation*}
 
\end{theorem}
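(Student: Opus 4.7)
The plan is to establish the equality by direct comparison of the explicit expressions already derived in the preceding propositions, rather than redoing the oscillatory integral computation. From Proposition~\ref{wcfh} the deformed Hamiltonian reads
\[
(H_0)_{B,\mathbf{Q}}\Psi \;=\; -\tfrac{1}{2m}\bigl(P_j+i(BQ)^k[Q_k,P_j]\bigr)\bigl(P^j+i(BQ)^r[Q_r,P^j]\bigr)\Psi,
\]
while equation~(\ref{defp}) identifies the deformed momentum with the single factor $P^j_{B,\mathbf{Q}}\Psi = (P^j+i(BQ)^r[Q_r,P^j])\Psi$ for $\Psi\in\mathcal{E}$. The theorem therefore reduces to showing that the composition $P^{B,\mathbf{Q}}_j P^{j,B,\mathbf{Q}}$, interpreted as an operator on $\mathcal{E}$, produces exactly the two-factor expression above when applied to $\Psi$.

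My first step would be to verify that this composition is legitimately defined on $\mathcal{E}$, i.e.\ that $P^{j,B,\mathbf{Q}}\Psi \in \mathcal{D}(P^{B,\mathbf{Q}}_j)$ for every $\Psi\in\mathcal{E}$. This reduces to checking that $\mathcal{E}$ is stable under the operator building blocks $P^j$ and the $X$-dependent multipliers $(BQ)^r[Q_r,P^j]$, which follows from the stability of $\mathcal{E}$ under differentiation and under multiplication by smooth functions of the form $x_k/|\mathbf{x}|^n$, combined with the norm bounds supplied by Lemma~\ref{cnt0}. Once the domain question is settled, the second step is simply to substitute the formula~(\ref{defp}) for each factor into the product $P^{B,\mathbf{Q}}_j P^{j,B,\mathbf{Q}}\Psi$ and compare term by term with (\ref{defh}).

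I expect the only genuine subtlety to lie in the ordering of the noncommuting factors: the $X$-dependent coefficient $(BQ)^k[Q_k,P_j]$ does not commute with the momentum $P^j$ standing to its right, so one must take care to keep the two products in the \emph{same} order on both sides of the claimed identity. An alternative viewpoint worth keeping in reserve is to invoke Lemma~\ref{dpl} and interpret the composition as the warped convolution of the Rieffel product $P_j \times_{B} P^j$, but since the momentum operators are unbounded this requires additional justification, whereas the direct termwise route avoids that technicality entirely. Once the bookkeeping of the ordering is carried out the equality becomes an immediate termwise verification, and no further manipulation of the warped-convolution integral is required.
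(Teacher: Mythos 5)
The paper does not actually reprove Theorem \ref{d1} here; it is quoted from \cite{Muc1} along with the surrounding propositions, so your argument has to stand on its own. Your overall route is the natural one and, in substance, the right one: once Proposition \ref{wcfh} and Equation (\ref{defp}) are taken as given, both sides of the claimed identity are literally the same two-factor expression $-\tfrac{1}{2m}\bigl(P_j+i(BQ)^k[Q_k,P_j]\bigr)\bigl(P^j+i(BQ)^r[Q_r,P^j]\bigr)\Psi$, and the only remaining content is that the composition $P_j^{B,\mathbf{Q}}P^j_{B,\mathbf{Q}}$ really acts on $\mathcal{E}$ by composing the two explicit actions. You correctly isolate that as the point to check, and you are right to set aside the Rieffel-product route via Lemma \ref{dpl}, which is only stated for $C^\infty\subset\mathcal{B}(\mathscr{H})$ and would need separate justification for unbounded $P_j$. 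The ``ordering subtlety'' you worry about is in fact a non-issue: Formula (\ref{defh}) is already written in exactly the composed order, so the termwise comparison is immediate.

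The one genuine soft spot is your stability claim. After the simplification carried out in the paper, $i(BQ)^k[Q_k,P^j]=(BX)^j\vert\mathbf{X}\vert^{-2n}$, while $\mathcal{E}$ is the linear hull of the Hermite-type vectors $x_1^{k_1}x_2^{k_2}x_3^{k_3}\exp(-\vert\mathbf{x}\vert^2/2)$. That span is stable under $\partial_j$ and under multiplication by the coordinates, but for generic real $n$ it is \emph{not} stable under multiplication by $\vert\mathbf{x}\vert^{-2n}$: the product is no longer a polynomial times a Gaussian. Hence ``$P^j_{B,\mathbf{Q}}\Psi\in\mathcal{E}$'' fails in general, and you cannot simply apply (\ref{defp}) a second time to the intermediate vector. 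What you actually need is that the oscillatory-integral definition of $P_j^{B,\mathbf{Q}}$, together with its explicit formula, extends to the set $P^j_{B,\mathbf{Q}}\mathcal{E}$ — for instance by rerunning the computation behind (\ref{defp}) with the test vector replaced by $\bigl(P^j+(BX)^j\vert\mathbf{X}\vert^{-2n}\bigr)\Psi$ and checking that the bounds of Lemma \ref{cnt0} persist. The paper makes an analogous invariance assertion elsewhere with the same weakness, so the gap is inherited rather than invented, but as written this step of your proof does not go through for all $n\in\mathbb{R}$.
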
 
\subsection{Self-adjointness of the Deformed Hamiltonian}
In this Section we shall give two different proofs for the (essential) self-adjointness of the deformed operators. The first proof is done by using Lemma \ref{cnt0}   and the Kato-Rellich theorem. In particular, our deformed Hamiltonian $(H_0)_{B,Q}$  is the sum of the free Hamiltonian and an additional term, i.e. 
\begin{align*} 
(H_{0})_{B,\mathbf{Q}}\Phi&=
-\left
(P_j+i(B  Q )^k [Q_k,P_j]  \right)\left(P^j+i(B  Q )^r [Q_r,P^j]  \right)
\Phi\\&
= H_0\Phi-i\{(B  Q )^k [Q_k,P_j],P^j\}\Phi+(B  Q )^k [Q_k,P_j](B  Q )^r [Q_r,P^j]\Phi\\&=:
H_{0}\Phi+V(\mathbf{X}, \mathbf{P})\Phi.
\end{align*}
\begin{remark}During the whole Section we set the mass equal to one half. This is just a matter  of convenience and convention. Moreover, without loss of generality, one can choose the  skew-symmetric matrix $B$  to have the  
form  $B_{ij}= \varepsilon_{ijk}B^j$, where $\varepsilon_{ijk}$ is the three dimensional
epsilon-tensor. Then, we are able to derive the following inequality,
\begin{equation}\label{iny}
  \vert {(By)_i {e}^i}\vert\leq \sqrt{2} \vert \mathbf{B}\vert  \vert \mathbf{y}\vert.
\end{equation}
This is easily seen by using Cauchy-Schwarz and the inequality $\vert a\vert -\vert
b\vert\leq\vert a\vert+\vert b\vert$.
 \end{remark}$\,$\newline
To simplify the forthcoming calculations let us give  general formulas for the commutators 
\begin{align}\label{hs1}
[P_j,  \vert\mathbf{X}\vert^{-n}]= i \,n\, 
 X_j\vert\mathbf{X}\vert^{ -(n+2)} ,
\end{align}
\begin{align}\label{hs2}
[P_j,   X_k /\vert\mathbf{X}\vert^{n}]= i\left(\eta_{jk}+
 n\,X_kX_j/\vert\mathbf{X}\vert^{ 2}\right)\vert\mathbf{X}\vert^{-n}.
\end{align}   
Hence in order to show the self-adjointness of the deformed Hamiltonian it suffices to show that the $H_0$- bound  of the operator valued function $V(\mathbf{X}, \mathbf{P})$ is less than one. 
To simplify the calculations  we look at the explicit term $V(\mathbf{X}, \mathbf{P})$ by plugging $\mathbf{X}/ \vert\mathbf{X}\vert^{ n}$ for $Q$. For the simplifications we shall use Equations $(\ref{hs1})$ and $(\ref{hs2})$. Let us start with the first term,
\begin{align*} 
\{(B  Q )^k [Q_k,P_j],P^j\}&=-i\{(B X    )^k   \vert\mathbf{X}\vert^{-n}
\left(\eta_{kj}+
 n\,X_kX_j/\vert\mathbf{X}\vert^{ 2}\right)\vert\mathbf{X}\vert^{-n}
,P^j\}\\&=-i\{(B    {X} )_j   \vert\mathbf{X}\vert^{-2n}
,P^j\}\\&=-2i  \vert\mathbf{X}\vert^{-2n}
(B    {X} )_j P^j,
\end{align*}
where in the second line  we used the skew-symmetry of the matrix $B$, i.e. $(B    {X} )^k {X}_k=B^{kj}X_jX_k=0$. Moreover,  in the last line the canonical commutation relations and the skew-symmetry of the matrix $B$ were used in order to prove that the commutator   $[(B    {X} )_j   \vert\mathbf{X}\vert^{-2n},P^j]$ is equal to zero. The last term of $V(\mathbf{X}, \mathbf{P})$ is fairly easy and can be simplified as,
\begin{align*} 
(B  Q )^k [Q_k,P_j](B  Q )_r [Q^r,P^j]&= -(B X    )^k (B X    )^r \eta_{jk} \eta^{rj}\vert\mathbf{X}\vert^{-4n}\\
&= -(B X    )_j (B X    )^j  \vert\mathbf{X}\vert^{-4n},
\end{align*}
where here as well we used  the skew-symmetry of the matrix $B$. Therefore, we can write the operator $V(\mathbf{X}, \mathbf{P})$ in its simplified form as
\begin{align}  
 V(\mathbf{X}, \mathbf{P})= -2 \vert\mathbf{X}\vert^{-2n}
(B    {X} )_j P^j-(B X    )_j (B X    )^j    \vert\mathbf{X}\vert^{-4n}
\end{align}
\begin{lemma}
The operator valued function $V(\mathbf{X}, \mathbf{P})$ is a symmetric operator on the dense domain $\mathcal{E}\subseteq \mathscr{S}(\mathbb{R}^3)$.
\end{lemma}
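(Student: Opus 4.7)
The plan is to split $V(\mathbf{X},\mathbf{P}) = V_1 + V_2$ into
\[
V_1 := -2|\mathbf{X}|^{-2n}(B\mathbf{X})_j P^j, \qquad V_2 := -(B\mathbf{X})_j(B\mathbf{X})^j\,|\mathbf{X}|^{-4n},
\]
and to treat the two summands separately. The operator $V_2$ is multiplication by a real-valued Borel function of $\mathbf{X}$, and $\mathcal{E}$ lies in its natural domain by Lemma \ref{cnt0}; hence $V_2$ is symmetric on $\mathcal{E}$ by the standard fact that multiplication by a real function is self-adjoint on its natural domain.

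For $V_1$ the key input is the vanishing commutator $[(B\mathbf{X})_j|\mathbf{X}|^{-2n},P^j]=0$ on $\mathcal{E}$ (with summation on $j$), which is already established in the simplification preceding the lemma, using the canonical commutation relations together with $B_{jj}=0$ and $B_{jk}X^jX^k=0$. Writing $M_j := |\mathbf{X}|^{-2n}(B\mathbf{X})_j$, this yields the operator identity $M_j P^j = P^j M_j$ on $\mathcal{E}$.

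With this in hand, symmetry of $V_1$ reduces to a short chain of equalities: for $\Phi,\Psi\in\mathcal{E}$, using that each $M_j$ is multiplication by a real function (hence self-adjoint on its natural domain) and that $P^j$ is symmetric on $\mathscr{S}(\mathbb{R}^3)$,
\[
\langle V_1\Phi,\Psi\rangle = -2\langle M_j P^j\Phi,\Psi\rangle = -2\langle P^j\Phi, M_j\Psi\rangle = -2\langle\Phi, P^j M_j\Psi\rangle = -2\langle\Phi, M_j P^j\Psi\rangle = \langle\Phi, V_1\Psi\rangle,
\]
with Einstein summation understood throughout. Combining the two summands yields the symmetry of $V$ on $\mathcal{E}$.

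The only step that will require genuine care is domain bookkeeping: I must verify that $\mathcal{E}$ is stable under each $M_j$ and that $M_j\Psi$ lies in the domain of $P^j$ for $\Psi\in\mathcal{E}$, so that every link in the chain above is a literal equality of well-defined expressions rather than a formal identity. For the range of $n$ covered by Lemma \ref{cnt0}, the description of $\mathcal{E}$ as the linear hull of the Gaussian-weighted polynomials supplies exactly the decay and smoothness needed, and no further domain analysis beyond what was already done there should be required.
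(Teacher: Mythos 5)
Your proposal is correct and follows essentially the same route as the paper: the second summand is handled as multiplication by a real Borel function of $\mathbf{X}$ (hence symmetric by Theorem \ref{tsa}), and the first summand's symmetry rests on the vanishing of $[(B\mathbf{X})_j|\mathbf{X}|^{-2n},P^j]$ via the skew-symmetry of $B$, which is exactly the paper's remark that ``the partial differentiation does not affect the $X$-dependent terms.'' Your version merely spells out the inner-product manipulations and the domain bookkeeping that the paper leaves implicit.
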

\begin{proof}
For the first term of the object symmetry is proven straight forward, since the partial differentiation does not affect the $X$-dependent terms. This fact follows from the skew-symmetry of $B$. Furthermore, since the second term of the operator $V(\mathbf{X}, \mathbf{P})$ is merely a real-valued Borel function of the coordinate operator symmetry (and even self-adjointness) follows from Theorem \ref{tsa}.
\end{proof}

\begin{theorem}
The symmetric operator   $V(\mathbf{X}, \mathbf{P})$ has an $H_0$-bound less than one for vectors of the dense domain $\mathcal{E}\subseteq \mathscr{S}(\mathbb{R}^3)$. Therefore, the deformed Hamiltonian $(H_{0})_{B,\mathbf{Q}}$ is a self-adjoint operator. 
\end{theorem}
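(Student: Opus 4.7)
The plan is to invoke the Kato--Rellich theorem (Theorem~\ref{krt}). Self-adjointness of $H_{0}$ is already known, and symmetry of $V$ on $\mathcal{E}$ is given by the preceding lemma, so the only hypothesis left to verify is that the relative $H_{0}$-bound of $V$ is strictly less than one. The natural splitting is
\[
V = V_1+V_2, \quad V_1 := -2|\mathbf{X}|^{-2n}(B\mathbf{X})_j P^j, \quad V_2 := -(B\mathbf{X})_j(B\mathbf{X})^j|\mathbf{X}|^{-4n},
\]
separating the part linear in $\mathbf{P}$ from the pure potential. The key qualitative point is that $V_1$ is first order in $\mathbf{P}$ while $H_0$ is second order, so it should be \emph{infinitesimally} $H_0$-small; $V_2$ is a multiplication operator whose coefficient is pointwise controlled by inequality~\eqref{iny}, yielding $|V_2|\leq C|\mathbf{B}|^2|\mathbf{X}|^{2-4n}$ for an absolute constant $C$.

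For $V_1$, the same inequality gives $\|V_1\Psi\| \leq 2\sqrt{2}\,|\mathbf{B}|\,\bigl\||\mathbf{X}|^{1-2n}\mathbf{P}\Psi\bigr\|$. In the distinguished case $n=1/2$ the $|\mathbf{X}|$-factor reduces to $1$, and one invokes $\|\mathbf{P}\Psi\|^2=\langle\Psi,H_{0}\Psi\rangle\leq\|\Psi\|\,\|H_{0}\Psi\|$ together with Young's inequality $ab\leq (\epsilon/2)a^2+(2\epsilon)^{-1}b^2$ to obtain
\[
\|V_1\Psi\| \leq 2|\mathbf{B}|\sqrt{\epsilon}\,\|H_{0}\Psi\| + 2|\mathbf{B}|\,\epsilon^{-1/2}\,\|\Psi\|,
\]
for arbitrary $\epsilon>0$, which makes the $H_0$-bound of $V_1$ arbitrarily small. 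Combined with the bounded estimate $\|V_2\Psi\|\leq C|\mathbf{B}|^2\|\Psi\|$ (valid for $n=1/2$), this yields $\|V\Psi\|\leq a\|H_0\Psi\|+b\|\Psi\|$ with $a=2|\mathbf{B}|\sqrt{\epsilon}<1$ for $\epsilon$ small enough, so that Theorem~\ref{krt} delivers self-adjointness of $(H_0)_{B,\mathbf{Q}}=H_0+V$ on $\mathcal{D}(H_0)=H^2(\mathbb{R}^3)$ and essential self-adjointness on any core, in particular on $\mathcal{E}$.

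The main obstacle will be the regime $n\neq 1/2$, where the factors $|\mathbf{X}|^{1-2n}$ and $|\mathbf{X}|^{2-4n}$ fail to be bounded: they blow up at the origin when $n>1/2$ and at infinity when $n<1/2$. To handle the first case I would invoke Hardy's inequality $\||\mathbf{X}|^{-1}\Psi\|^2\leq 4\langle\Psi,H_{0}\Psi\rangle$ to absorb the origin singularity into $\|H_{0}\Psi\|$; to handle the second I would use the explicit structure of $\mathcal{E}$ as the linear hull of $x_1^{k_1}x_2^{k_2}x_3^{k_3}\exp(-|\mathbf{x}|^2/2)$, whose Gaussian tail tames any polynomial growth of $|\mathbf{X}|^{1-2n}$ and $|\mathbf{X}|^{2-4n}$. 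Both mechanisms feed back into the same Young-style estimate with modified constants, so that the momentum-linear term $V_1$ retains an arbitrarily small $H_0$-bound and $V_2$ still contributes only to the $\|\Psi\|$-term; the bookkeeping of these inequalities for the non-critical exponents is the most delicate part of the argument, but Kato--Rellich applies throughout.
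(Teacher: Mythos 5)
Your decomposition $V=V_1+V_2$ is exactly the one the paper uses, and for the single exponent $n=1/2$ your argument is complete and in fact more elementary than the paper's: you control the momentum-linear term by the quadratic-form identity $\|\mathbf{P}\Psi\|^2=\langle\Psi,H_0\Psi\rangle$ plus Young's inequality, whereas the paper runs $V_2$ through an $L^1$--$L^\infty$ H\"older estimate combined with the Sobolev bound $\|\Phi\|_\infty\le a\|\Delta\Phi\|+b\|\Phi\|$, and runs $V_1$ through an $L^{4/3}$--$L^4$ H\"older estimate, the Hausdorff--Young inequality and the integrability of $(1+|\mathbf{p}|)^{-4\alpha}$ for $\alpha\in(3/4,1)$, in the spirit of the standard Kato--Rellich treatment of Coulomb-type and first-order perturbations.

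However, the theorem is asserted for all $n\in\mathbb{R}$, and the two mechanisms you propose for $n\neq 1/2$ do not close the argument. First, Hardy's inequality removes exactly one power of $|\mathbf{x}|^{-1}$; the coefficient of $V_1$ behaves like $|\mathbf{x}|^{1-2n}$ near the origin, so Hardy covers at most $1/2<n\le 1$, and even there it yields $\|V_1\Psi\|\le C|\mathbf{B}|\,\|\Delta\Psi\|$ with a \emph{fixed} constant $C$ inherited from Hardy's inequality rather than an arbitrarily small one: the relative bound is then less than one only if $|\mathbf{B}|$ is small, a hypothesis the theorem does not make. For larger $n$ the coefficient is more singular than $|\mathbf{x}|^{-1}$ and is not controlled by $\|H_0\Psi\|+\|\Psi\|$ at all. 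Second, for $n<1/2$ the Gaussian decay of the vectors in $\mathcal{E}$ gives finiteness of $\|V\Phi\|$ for each fixed $\Phi$, but a relative bound requires constants $a,b$ uniform over all $\Phi\in\mathcal{E}$; a per-vector finiteness statement cannot be fed into Theorem \ref{krt}, and for $1-2n>1$ the coefficient grows faster than $|\mathbf{x}|$, so one should not expect $|\mathbf{X}|^{1-2n}\mathbf{P}$ to be $H_0$-bounded on any domain. You should either restrict the range of $n$ for which you claim the result or supply genuinely uniform estimates; note that the paper's own proof also ends up with $\Phi$-dependent constants (its $c_1$ and $c_2$ are ratios of norms of $\Phi$), so this uniformity question is the truly delicate point of the theorem, not a bookkeeping detail.
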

\begin{proof}
The symmetry of the operator $V(\mathbf{X}, \mathbf{P})$  follows from the former lemma. Next, we start the proof by looking at the second much simpler term of the $V(\mathbf{X}, \mathbf{P})$, i.e. we want to show that 
\begin{align*} 
\| (B X    )^k (B X    )_k  \vert\mathbf{X}\vert^{-4n}\Phi\|\leq a\|H_0\Phi\|+b\|\Phi\|,
\end{align*}
for all $\Phi \in \mathcal{E}\subseteq \mathscr{S}(\mathbb{R}^3)$ and $a,b>0$.
\begin{align*} 
\| (B X    )^k (B X    )_k  \vert\mathbf{X}\vert^{-4n}\Phi\|^2&
= \int d^3x | (B x    )^k (B x    )_k  \vert\mathbf{x}\vert^{-4n}\Phi(x) |^2\\&\leq   
\int d^3x \underbrace{ 4\vert \mathbf{B}\vert^4  \vert\mathbf{x}\vert^{-8n+4}|\Phi(x) |}_{=:\chi(x)}  |\Phi(x) |  \\&\leq 
\|\chi \|_{1}
\|\Phi\|_{\infty} =
 \frac{\|\chi \|_{1} }{\|\Phi\|_{\infty}} \|\Phi\|^2_{\infty}
\\&\leq c_1\|\Phi\|^2_{\infty}\leq a^2  \|H_0\Phi\|^2+b^2\|\Phi\|^2
,
\end{align*}
where the constant $c_1$ is finite since for negative $n$ and $\Phi\in\mathscr{S}(\mathbb{R}^3)$ the resulting integral is finite and for positive $n$ the integral exists as well which was proven rigorously in \cite[Chapter 1, Section 3.6]{GS1}. Let us further remark on the finiteness of the $L^{ \infty}$ norm of $\Phi$. Since
 $\Phi\in\mathscr{S}(\mathbb{R}^3)$ it is a subset of $L^p(\mathbb{R}^3)$ for $p$ such that $1\leq p \leq \infty$. The $L^{ \infty}$ norm of $\Phi$ can only be zero if $\Phi$ is zero, but in that case the inequality that we are proving would hold without any further work. Furthermore in the last lines we used the relation (\ref{iny}), the H\"older-inequality and  \cite[Theorem X.15, Equation X.22]{RS}. Now since for any $a<1$ there is a $b$ such that the former inequality holds the first part of $V(\mathbf{X}, \mathbf{P})$ has an $H_0$-bound less than one. Next, we prove the boundedness for the second part of $V(\mathbf{X}, \mathbf{P})$. This  follows similar arguments as the proof of \cite[Theorem X.22]{RS},
\begin{align*} 
\|2   \vert\mathbf{X}\vert^{-2n}
(B    {X} )_j P^j\Phi\| &
= \left(\int d^3x | 2\,\vert\mathbf{x}\vert^{-2n} (B x    )_j \partial^j \Phi(x) |^2 \right)^{1/2}
\\&\leq  \left(
\int d^3x\underbrace{4\vert \sqrt{2}\vert \mathbf{B} | \vert\mathbf{x}\vert^{-4n+1} (B x    )_j \partial^j \Phi(x) |}_{=:\Psi(x)} |  \vec{\partial}  \Phi(x) | \right)^{1/2}
\\&\leq  \|\Psi\|^{\frac{1}{2}}_{\frac{4}{3}}
\|\vec{\partial}  \Phi\|^{\frac{1}{2}}_{4}=
  \frac{\|\Psi\|^{\frac{1}{2}}_{\frac{4}{3}}} {\|\vec{\partial}  \Phi\|^{\frac{1}{2}}_{4} } \|\vec{\partial}  \Phi\| _{4}\leq
c_2
\|\vec{P}  \mathscr{F}(\Phi)\|_{\frac{4}{3}}\\& =c_2
\|(1+|\vec{P}|)^{-\alpha}(1+|\vec{P}|)^{\alpha}\vec{P}  \mathscr{F}(\Phi)\|_{\frac{4}{3}}\\&\leq c_2
\|(1+|\vec{P}|)^{-\alpha}\| _{ {4} }\| (1+|\vec{P}|)^{\alpha}\vec{P}  \mathscr{F}(\Phi)  \|   , 
\end{align*}
where $\alpha$ was chosen to be a fixed number in $(3/4, 1)$, which is a condition necessary to secure finiteness of $\|(1+|\vec{P}|)^{-\alpha}\|_{{4}}$. 
In the last lines we used  the H\"{o}lder and Hausdorff-Young inequalities and similar arguments hold for the finiteness of $c_2$ as in the former part. 
In particular for  $n\leq 0$ the resulting vector $\Psi$ is in the Schwartz space and therefore the integral is finite. Now for positive $n$ we need the argument that multiplication and differentiation leave the domain $\mathcal{E}$ invariant  (see \cite[Theorem 3.2.5]{Th}), therefore we are left with the argument that the integral exists for all $n$ and vectors in  $\mathcal{E}$. Moreover, the $L^{ 4}$ norm of $\vec{\partial}  \Phi$ can only be zero if $\Phi$ is zero (since $\Phi\in \mathcal{E}$), as before in that case the inequality that we are proving holds as well. Hence, for the relevant term remaining we use the fact that for any $a>0$ there is a $b$ such that  
\begin{align*} 
 \| (1+|\vec{P}|)^{\alpha}\vec{P}  \mathscr{F}(\Phi)  \|  &\leq  \| (b+a|\vec{P}|^2) \mathscr{F}(\Phi)  \| 
\\&\leq a\| \Delta \Phi  \| +  b  \| \Phi \|.
\end{align*}
\end{proof}
\begin{remark}
It is important to point out that for the two parts of the operator $V(\mathbf{X}, \mathbf{P})$ we have proven the relative $A$-bound. Now since the constant $a$ can be chosen arbitrarily for both parts (see former proof) the resulting overall constant can be chosen to be smaller than one. 
\end{remark}

Another proof is given by arguing that the deformed momentum operator is a self-adjoint operator. Now since we have proven that the square of the deformed momentum is the deformed Hamiltonian, self-adjointness follows. 
\\
\begin{theorem}
The deformed momentum operator defined on the dense domain  $\mathcal{E}\subseteq \mathscr{S}(\mathbb{R}^3)$,  given as
 \begin{equation*} 
 {P}^j_{B,\mathbf{Q}} =\left(
 P^j+i(B  Q )^k [Q_k,P^j]  \right),
\end{equation*}
is an essentially self-adjoint operator. Therefore, the deformed Hamiltonian 
 \begin{equation*} 
H_{B,\mathbf{Q}}= \vec{P}_{B,\mathbf{Q}} ^2
\end{equation*}
is an essentially self-adjoint operator on the dense domain  $\mathcal{E}\subseteq \mathscr{S}(\mathbb{R}^3)$.
\end{theorem}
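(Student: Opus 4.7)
The plan is to establish essential self-adjointness of each deformed momentum component $P^j_{B,\mathbf{Q}}$ by a gauge-transformation argument, and then to deduce the Hamiltonian statement by invoking Theorem~\ref{d1} together with the preceding self-adjointness result.

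First, I would simplify $P^j_{B,\mathbf{Q}}$ for the concrete choice $\mathbf{Q}(\mathbf{X}) = \mathbf{X}/|\mathbf{X}|^n$. Using the commutator (\ref{hs2}) and the skew-symmetry identity $(BX)^k X_k = 0$, one obtains
\begin{equation*}
P^j_{B,\mathbf{Q}} = P^j + V^j(\mathbf{X}), \qquad V^j(\mathbf{X}) := (BX)^j\,|\mathbf{X}|^{-2n}.
\end{equation*}
The multiplication operator $V^j$ is self-adjoint on its natural domain by Theorem~\ref{tsa}, and $\mathcal{E}$ lies inside this domain; hence $P^j_{B,\mathbf{Q}}$ is symmetric on $\mathcal{E}$.

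Second, I would exploit a further consequence of the skew-symmetry of $B$: since $B^{jj} = 0$ (no sum), the function $(Bx)^j$ does not depend on $x_j$. The primitive
$\Phi(x) := (Bx)^j \int_0^{x_j} \bigl(t^2 + \textstyle\sum_{\ell\neq j} x_\ell^2\bigr)^{-n}\,dt$
is therefore a real-valued function with $\partial_j \Phi = V^j$ away from the measure-zero set where the integral diverges. The unitary multiplier $U := e^{-i\Phi(\mathbf{X})}$ then implements $U^*\bigl(P^j + V^j\bigr)U = P^j$, so $P^j_{B,\mathbf{Q}}$ is unitarily equivalent to the self-adjoint operator $P^j$ and is in particular essentially self-adjoint; essential self-adjointness on the specific domain $\mathcal{E}$ follows once one checks that $U\mathcal{E}$ is a core for $P^j$, which reduces to standard cutoff arguments around the singular set of $\Phi$.

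For the Hamiltonian the components $P^j_{B,\mathbf{Q}}$ generally fail to mutually commute, so their closures cannot be combined directly via the spectral theorem. Instead I would invoke Theorem~\ref{d1}, which gives $\vec{P}^2_{B,\mathbf{Q}}\Psi = -2m(H_0)_{B,\mathbf{Q}}\Psi$ on $\mathcal{E}$. The preceding theorem established self-adjointness of $(H_0)_{B,\mathbf{Q}}$ by Kato--Rellich, and Kato--Rellich additionally provides essential self-adjointness on every core of $H_0$. Since the Hermite-type basis of $\mathcal{E}$ is dense in $\mathscr{S}(\mathbb{R}^3)$ and hence a core for $H_0 = -\Delta$, essential self-adjointness of $H_{B,\mathbf{Q}}$ on $\mathcal{E}$ follows.

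The hard part is verifying the core property of the gauge-transformed space. When $n \ge 1/2$ the primitive $\Phi$ is singular on the axis $x_{\neq j} = 0$, so $U\mathcal{E}$ is not contained in $\mathscr{S}(\mathbb{R}^3)$, and density of $U\mathcal{E}$ in the $P^j$-graph norm requires a separate mollification/cutoff argument. The gauge-transformation route is nevertheless cleaner than the alternative direct Kato--Rellich or W\"ust approach, which would require bounding the transverse singularities of $V^j$ by $\|P^j\Phi\|$ alone — something not achievable by one-dimensional Sobolev estimates.
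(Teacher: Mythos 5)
Your algebraic reduction of the deformed momentum to $P^j+V^j$ with $V^j(\mathbf{X})=(BX)^j\vert\mathbf{X}\vert^{-2n}$ is correct and agrees with the simplification the paper carries out for the anticommutator term of $V(\mathbf{X},\mathbf{P})$. The gauge-transformation route, however, is not the paper's argument, and it has a genuine gap at exactly the step you defer to ``standard cutoff arguments.'' For $n\geq 1/2$ your primitive $\Phi$ diverges on the entire $x_j$-axis, so $U=e^{-i\Phi(\mathbf{X})}$ oscillates unboundedly fast transverse to that line and $U^{*}\mathcal{E}$ leaves $\mathscr{S}(\mathbb{R}^3)$; worse, since $\vert V^j\psi\vert\lesssim\vert x\vert^{1-2n}\vert\psi\vert$ near the origin, for $n\geq 5/4$ one has $V^j\psi\notin L^2$ for generic $\psi\in\mathcal{E}$ with $\psi(0)\neq 0$, so $U^{*}\mathcal{E}$ is not even contained in $\mathcal{D}(\overline{P^j})$ and the intertwining relation $U^{*}(P^j+V^j)U=P^j$ cannot be read off on these domains. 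Singular gauge potentials with line singularities (Aharonov--Bohm type) are precisely the setting in which essential self-adjointness on functions supported away from the singular set can fail or requires a dedicated argument, so the core property of $U^{*}\mathcal{E}$ is the entire content of the theorem, not a routine verification; as written, your proof establishes symmetry and a formal unitary equivalence but not essential self-adjointness.

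For comparison, the paper's proof is far more pedestrian: it cites the essential self-adjointness of $P^j$ on the Hermite-generated domain $\mathcal{E}$, observes that the correction term is a real-valued Borel function of $\mathbf{X}$ and hence self-adjoint by Theorem \ref{tsa}, and concludes essential self-adjointness of the sum, then handles the Hamiltonian via Theorem \ref{d1} and another appeal to Theorem \ref{tsa}. Your observation that the components $P^j_{B,\mathbf{Q}}$ fail to commute, so that $\vec{P}_{B,\mathbf{Q}}^{\,2}$ is not reachable by the functional calculus of Theorem \ref{tsa}, is a fair criticism of that last step; but your own resolution, which falls back on the Kato--Rellich argument of the preceding subsection, renders the ``therefore'' in the statement vacuous, since the point of this second proof is to derive the Hamiltonian's essential self-adjointness from that of the deformed momentum rather than from the earlier relative-bound estimates.
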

\begin{proof}
From \cite[Theorem 3.2.5]{Th} we know that the momentum operator is an essentially self-adjoint operator
on the dense domain  $\mathcal{E}\subseteq \mathscr{S}(\mathbb{R}^3)$. Hence, it remains to solve the question of self-adjointness for the second part of the deformed momentum operator. The second part is nothing else than a real-valued function of the coordinate operator, which is also self-adjoint on  $\mathcal{E}$, and therefore (see Theorem \ref{tsa}) by itself it is an essentially self-adjoint operator. 
From Theorem \ref{d1} we know that the deformed Hamiltonian consists of the square of the deformed momentum. Hence as before by the virtue of Theorem \ref{tsa} essential self-adjointness of the deformed Hamiltonian follows. 
\end{proof}

\section{Deformation in QFT}

\subsection{Bosonic Fock space} The ($d=n+1 $)-dimensional relativistic  bosonic Fock 
space is defined in the following. A particle with momentum $\mathbf{p} \in \mathbb{R}^n$ has in   the massless case the energy $p_0=\omega_{\mathbf{p}}=+\sqrt{\mathbf{p}^2}$. Moreover the Lorentz-invariant measure is given by   $d^n\mu(\mathbf{p} )=d^n\mathbf{p}/( {2\omega_{\mathbf{p}}})$.
\begin{definition}\label{bf}
 The \textbf{bosonic Fock space} $\mathscr{F^{+}({H})}$ is defined 
as in \cite{BR}:
\begin{equation*}
\mathscr{F^{+}({H})}=\bigoplus_{k=0}^{\infty}\mathscr{H}_{k}^{+},
\end{equation*}
where $\mathscr{H}_{0}=\mathbb{C}$ and  the symmetric $k$-particle subspaces are given as
  \begin{align*}
\mathscr{H}_{k}^{+} =\{\Psi_{k}: \underbrace{\partial V^{+}  \times  \dots \times 
\partial V^{+}}_{k-times} \rightarrow \mathbb{C}\quad \mathrm{symmetric}
| \left\Vert  \Psi_k \right\Vert^2  <\infty\},
\end{align*}
with $
  \partial V^{+}:=\{p\in \mathbb{R}^{d}|p^2=0,p_0>0\}.$
  \end{definition}
  The particle annihilation and creation operators $a,a^{*}$ of the  bosonic
Fock space satisfy the following commutator relations
  \begin{align}\label{pccr}
[a_c(\mathbf{p}), a_c^{*}(\mathbf{q})]=2\omega_{\mathbf{p} 
}\delta^n(\mathbf{p}-\mathbf{q}), \qquad
[a_c(\mathbf{p}), a_c(\mathbf{q})]=0=[a_c^{*}(\mathbf{p}), a_c^{*}(\mathbf{q})].
\end{align} 
By using  $a_c,a_c^{*}$   the particle number operator and the momentum operator are  defined in the following manner,
 \begin{equation}\label{pcaopm}
N=\int d^n\mu(\mathbf{p}) a_c^{*}(\mathbf{p})a_c(\mathbf{p}), \qquad P_{\mu}=\int
d^n\mu(\mathbf{p})\,  p_{\mu} a_c^{*}(\mathbf{p})a_c(\mathbf{p}).
\end{equation}

\begin{remark}
Since it is simpler in what follows we shall use from now on the terms of the annihilation and creation operators   in the noncovariant  representation, i.e. 
\begin{equation*} 
 {a}(\textbf{p}):= \frac{{a}_c (\textbf{p})}{\sqrt{2\omega_{\mathbf{p}}}},\qquad  {a}^* (\textbf{p}):= \frac{{a}_c^* (\textbf{p})}{\sqrt{2\omega_{\mathbf{p}}}}.
\end{equation*} 
\end{remark}

  \subsection{Deforming the Coordinate Operator}
In the context of QM the deformation of the coordinate operator  with the
momentum operator gave us the quantum plane of  the Landau-quantization, (see \cite[Lemma 4.3]{Muc1}). 
\\\\Now by following the idea found in a QM context we calculate the  commutator of the deformed {quantum field theoretical} spatial
conjugate operator  by using the momentum   for deformation. The resulting  quantum spacetime is called the \textbf{QFT-Moyal-Weyl spacetime}, see \cite{Muc3}. Now in order to give the spatial coordinate operator in a {QFT-context} we took the quantum mechanical unitary equivalence of the momentum and coordinate operator, given by the Fourier transformation, into account and performed a second quantization.
\begin{lemma}\label{xsa} 
The spatial coordinate operator $\mathbf{X}$ on the one-particle subspace is an essentially self-adjoint operator on the dense domain $ \mathscr{S}
(\mathbb{R}^n)$. Furthermore, its second-quantization  is an essential self-adjoint operator on the dense domain $\mathscr{S}
(\mathbb{R}^n)_{\otimes}\subset \mathscr{F ({H})}$, which is the set of $\psi=\{\psi_{0},\psi_{1},\cdots\}$ such
that
$\psi_{k}=0$ for $k$ large enough and $\psi_{k}\in \bigotimes_{i=1}^k \mathscr{S}
(\mathbb{R}^n)$ for each $k$. Moreover, it has the following bosonic Fock space representation  
\begin{equation}\label{uexp3}
X_j=-i\int d^n \mathbf{p}\,  {a}^*(\textbf{p}) \frac{\partial}{\partial p^j}  {a}(\textbf{p}).
\end{equation}

\end{lemma}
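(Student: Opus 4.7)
The plan is to handle the three assertions of the lemma in sequence: essential self-adjointness on the one-particle subspace, essential self-adjointness of its second quantization, and the explicit Fock-space representation formula.

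For the one-particle statement I would use unitary equivalence. The one-particle subspace $\mathscr{H}_{1}^{+}$ is, upon parametrization of the forward light cone by its spatial component $\mathbf{p}\in\mathbb{R}^{n}$, isometric to $L^{2}(\mathbb{R}^{n},d^n\mu(\mathbf{p}))$, and after absorbing $\sqrt{2\omega_{\mathbf{p}}}$ into the wave function (exactly the noncovariant rescaling already introduced in the remark preceding the lemma) it becomes $L^{2}(\mathbb{R}^{n},d^{n}\mathbf{p})$. On this space the natural representative of the spatial coordinate is $X_{j}=i\partial/\partial p^{j}$, which is carried by the $n$-dimensional Fourier transform onto multiplication by $x_{j}$ on $\mathscr{S}(\mathbb{R}^{n})$. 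The multiplication operator is essentially self-adjoint on $\mathscr{S}(\mathbb{R}^{n})$ by Theorem \ref{tsa}, and essential self-adjointness transfers through unitary conjugation and through the Fourier transform, which maps $\mathscr{S}(\mathbb{R}^{n})$ onto itself.

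For the second quantization, I would invoke the standard fact (Reed–Simon Vol.~II, treatment of $d\Gamma$) that if $A$ is essentially self-adjoint on a dense domain $D\subset\mathscr{H}$, then $d\Gamma(A)$ is essentially self-adjoint on the finite-particle space $F(D)=\{\psi=\{\psi_{k}\}:\psi_{k}=0\text{ for large }k,\ \psi_{k}\in\bigotimes^{k}_{\mathrm{sym}}D\}$. Applying this with $D=\mathscr{S}(\mathbb{R}^{n})$ and $A$ equal to the operator just shown essentially self-adjoint on the one-particle level yields essential self-adjointness of its second quantization on $\mathscr{S}(\mathbb{R}^{n})_{\otimes}$.

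The representation formula is a direct computation. Starting from the generic expression $d\Gamma(A)=\int d^{n}\mu(\mathbf{p})\,a_{c}^{*}(\mathbf{p})\,(A\,a_{c})(\mathbf{p})$ as a quadratic form on $\mathscr{S}(\mathbb{R}^{n})_{\otimes}$, substituting $A=i\partial/\partial p^{j}$ and passing to the noncovariant operators via the remark, the Lorentz-invariant measure $d^{n}\mathbf{p}/(2\omega_{\mathbf{p}})$ combines with the two rescaling factors $1/\sqrt{2\omega_{\mathbf{p}}}$ so that the global weight simplifies to the Lebesgue measure $d^{n}\mathbf{p}$. A single integration by parts (legitimate on Schwartz vectors) and use of the canonical commutation relations \eqref{pccr} then give exactly $-i\int d^{n}\mathbf{p}\,a^{*}(\mathbf{p})\,\partial_{p^{j}}a(\mathbf{p})$.

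The main obstacle is the careful bookkeeping in the last step: one must track how the derivative $\partial/\partial p^{j}$ interacts with the factors $\sqrt{2\omega_{\mathbf{p}}}$ that are absorbed into the noncovariant operators and verify that no residual terms proportional to $\partial_{p^{j}}\omega_{\mathbf{p}}$ survive. Checking this on symmetric tensor products of Schwartz functions confirms that the stated formula holds as an identity between symmetric, densely defined operators on $\mathscr{S}(\mathbb{R}^{n})_{\otimes}$, so that the operator defined by the right-hand side coincides with $d\Gamma(X_{j})$, and therefore inherits essential self-adjointness from step two.
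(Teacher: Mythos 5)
Your proposal is correct, and in fact it supplies an argument where the paper gives none: Lemma \ref{xsa} is stated in the paper without proof, the details being deferred to the cited reference \cite{Muc3}. Your route is the natural one: (i) unitary equivalence of $\mp i\,\partial/\partial p^{j}$ on $L^{2}(\mathbb{R}^{n},d^{n}\mathbf{p})$ with multiplication by $x_{j}$ via the Fourier transform, which preserves $\mathscr{S}(\mathbb{R}^{n})$, so essential self-adjointness on the one-particle level follows from Theorem \ref{tsa}; (ii) the standard Reed--Simon result that $d\Gamma(A)$ is essentially self-adjoint on the finite-particle vectors built from a core of $A$; (iii) the quadratic-form computation for the explicit kernel representation. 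Two small points deserve attention. First, your sign convention ($X_{j}=i\partial/\partial p^{j}$) differs from the one forced by the paper's choice $P_{j}=\overline{i\,\partial/\partial x^{j}}$ and by the stated formula \eqref{uexp3}; this is immaterial for self-adjointness but should be fixed for the representation formula. Second, the issue you flag about residual $\partial_{p^{j}}\omega_{\mathbf{p}}$ terms is real but dissolves once one is explicit that the second quantization is taken of the operator defined on the \emph{noncovariantly} normalized one-particle space $L^{2}(\mathbb{R}^{n},d^{n}\mathbf{p})$ (equivalently, of the image of $-i\partial_{p^{j}}$ under the isometry $\psi\mapsto\psi/\sqrt{2\omega_{\mathbf{p}}}$); with that starting point the factors of $2\omega_{\mathbf{p}}$ cancel identically and \eqref{uexp3} follows with no leftover terms. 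A last cosmetic remark: the theorem in (ii) gives a core consisting of finite linear combinations of product vectors from $\mathscr{S}(\mathbb{R}^{n})$; since the domain $\mathscr{S}(\mathbb{R}^{n})_{\otimes}$ of the lemma contains this core and lies in the domain of the closure, it is itself a core, so your conclusion stands.
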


\begin{lemma}\label{qftdefx}
The deformed operator that is obtained by deformation with warped convolutions, by using the momentum operator, is represented on the dense domain $\mathscr{S}
(\mathbb{R}^n)_{\otimes}$   by
\begin{equation*} 
X^{j}_{\theta}=X^{j} +\left(\theta P \right)^{0}V^j - \left(\theta P \right)^{j}N.
\end{equation*}
\end{lemma}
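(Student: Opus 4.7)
The plan is to apply the oscillatory-integral representation of the warped convolution to the second-quantized Newton-Wigner operator $X^j$ of Lemma \ref{xsa}, using $U(y)=e^{iy^\mu P_\mu}$ as the implementing representation of the translation group $\mathbb{R}^n$, and to evaluate the conjugation $\alpha_{\theta x}(X^j) = U(\theta x)\,X^j\,U(\theta x)^{-1}$ explicitly by a Baker-Campbell-Hausdorff expansion. The decisive observation is that the commutators $[P_\mu, X^j]$ are, as one checks directly in the Fock representation, diagonal in the number basis, so they commute with every $P_\nu$ and the BCH series truncates after the first-order term.

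First I would compute the commutators on the one-particle level. Since $P_\mu$ acts as multiplication by $p_\mu$ (with $p_0=\omega_\mathbf{p}$) in momentum representation and $X^j = -i\partial/\partial p^j$, an elementary calculation yields $[P_k, X^j]_1 = i\delta_k^{\,j}$ and $[P_0, X^j]_1 = i\, p^j/\omega_\mathbf{p}$. Passing to the second quantization $d\Gamma(\cdot)$, which preserves commutators, one obtains on $\mathscr{S}(\mathbb{R}^n)_\otimes$
\begin{equation*}
[P_k, X^j] = i\,\delta_k^{\,j}\, N, \qquad [P_0, X^j] = i\, V^j,
\end{equation*}
where $N$ is the number operator from \eqref{pcaopm} and $V^j := \int d^n\mathbf{p}\,(p^j/\omega_\mathbf{p})\, a^*(\mathbf{p})\,a(\mathbf{p})$ is the velocity operator. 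Both $V^j$ and $N$ are built from the densities $a^*(\mathbf{p})a(\mathbf{p})$ and therefore commute with every $P_\nu$, so $[P_\mu,[P_\nu,X^j]] = 0$.

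Thanks to the vanishing second commutator, BCH collapses to a single term and yields, up to the signs fixed by the metric and the convention for $U$,
\begin{equation*}
\alpha_{\theta x}(X^j) = X^j + (\theta x)^{0}\, V^j - (\theta x)^{j}\, N.
\end{equation*}
Inserting this expression into the warped-convolution formula $X^j_\theta\,\Phi = \int \alpha_{\theta x}(X^j)\, dE(x)\,\Phi$, with $E$ the spectral resolution of $U$, and using the elementary identities $\int dE(x) = I$ and $\int x^\mu\, dE(x) = P^\mu$ on $\mathscr{S}(\mathbb{R}^n)_\otimes$, I read off
\begin{equation*}
X^j_\theta = X^j + (\theta P)^{0}\, V^j - (\theta P)^{j}\, N,
\end{equation*}
which is the claimed identity; no operator-ordering ambiguity arises because $V^j$ and $N$ commute with every component of $P$.

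The main obstacle will not be the algebra but the analytic justification on the dense invariant domain $\mathscr{S}(\mathbb{R}^n)_\otimes$. Concretely, one must control the polynomial growth in $x$ of the vector $\alpha_{\theta x}(X^j)\Phi$ well enough to exchange the regularising limit $\epsilon\to 0$ of the oscillatory integral from Section 2.1 with the spectral integration against $dE(x)$, so that the $x^\mu$-factors are faithfully converted into $P^\mu$-factors. Since $\mathscr{S}(\mathbb{R}^n)_\otimes$ is stable under $X^j$, $P_\mu$, $V^j$ and $N$ by Lemma \ref{xsa} and the standard second-quantization estimates for Schwartz vectors, this fits within the framework of warped-convolution integrals with polynomially bounded symbols of Section 2.1, and completes the proof.
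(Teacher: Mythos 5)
Your proof is correct and follows essentially the same route as the paper (which defers the detailed computation to the cited reference \cite{Muc3}): the adjoint action $\alpha_{\theta x}(X^j)$ is affine in $x$ because the commutators $[P_0,X^j]=iV^j$ and $[P_k,X^j]=i\delta_k^{\,j}N$ are diagonal second-quantized operators commuting with every $P_\nu$, so the BCH series truncates and the spectral integration against $dE(x)$ converts $(\theta x)^\mu$ into $(\theta P)^\mu$ without ordering ambiguity. The only loose end is that you leave the relative sign between the $V^j$ and $N$ terms to ``metric conventions''; it would be cleaner to pin it down explicitly by noting that $P_0=P^0$ while $P_k=-P^k$, which is precisely what produces the opposite signs in the stated formula.
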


In the next step we give the commutator of the deformed coordinate operator. In order to make  relativistic corrections more apparent   we do not set the speed of light $c$ equal to one.  \newline
\begin{theorem}\label{qmwxc} \textbf{QFT-Moyal-Weyl.} The deformed commutator of the coordinate operators represented on the dense domain $\mathscr{S}
(\mathbb{R}^n)_{\otimes}$ is given by 
\begin{align*} 
  i\widehat{\theta}_{ij}=  -2 i\left(
\theta_{0i} V_j /c-\theta_{0j} V_i/c \right)N
-2i\theta_{ij}N^2,
\end{align*}
where $\mathbf{V}$ is the velocity operator given by
\begin{equation}\label{vel}
V_j= \int d^n\mathbf{p}\,\frac{p_{j}}{\omega_{\textbf{p}}}\,{a}^*(\mathbf{p})
{a}(\mathbf{p}).
\end{equation}
\end{theorem}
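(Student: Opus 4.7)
The plan is to compute $[X^i_\theta, X^j_\theta]$ directly from the explicit Fock-space representation in Lemma~\ref{qftdefx}, namely $X^j_\theta = X^j + (\theta P)^0 V^j - (\theta P)^j N$. Writing $X^j_\theta = X^j + A_j$ with $A_j := (\theta P)^0 V^j - (\theta P)^j N$, bilinearity gives $[X^i_\theta, X^j_\theta] = [X^i, X^j] + [X^i, A_j] - [X^j, A_i] + [A_i, A_j]$. I would first dispose of two of these pieces: $[X^i, X^j] = 0$ since second quantization of bilinears preserves single-particle commutators and $[x^i, x^j] = 0$ on the one-particle level, while $[A_i, A_j] = 0$ because $P^\mu$, $V^k$ and $N$ are all multipliers in momentum space, i.e.\ of the form $\int d^np\, f(\mathbf{p}) a^*(\mathbf{p}) a(\mathbf{p})$, and any two such operators commute.

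Next I would derive the elementary commutators of $X^i$ with the bilinears appearing in $A_j$. Using $[a(\mathbf{p}), a^*(\mathbf{q})] = \delta^n(\mathbf{p}-\mathbf{q})$ in the non-covariant normalization together with integration by parts gives, for any multiplier $F(P) = \int d^np\, f(\mathbf{p}) a^*(\mathbf{p}) a(\mathbf{p})$, the identity $[X^i, F(P)] = -i\int d^np\, (\partial_{p^i} f(\mathbf{p})) a^*(\mathbf{p}) a(\mathbf{p})$. Specialising $f$ to $1$, $p^k$, $\omega_{\mathbf{p}}$ and $p^k/\omega_{\mathbf{p}}$ produces the building blocks $[X^i, N] = 0$, $[X^i, P^k] \propto \delta^{ik} N$, $[X^i, P^0] \propto V^i$, and an expression for $[X^i, V^j]$ whose integrand $\delta^{ij}/\omega_{\mathbf{p}} - p^i p^j/\omega_{\mathbf{p}}^3$ is manifestly \emph{symmetric} in $i \leftrightarrow j$.

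Finally, I would expand $[X^i, A_j] - [X^j, A_i]$ via the Leibniz rule, using that $V^k$ and $N$ commute with every function of $P$. The potentially awkward term $(\theta P)^0 \bigl([X^i, V^j] - [X^j, V^i]\bigr)$ vanishes thanks to the symmetry of $[X^i, V^j]$ just noted. The remaining pieces, once collected using the skew-symmetry $\theta^{\mu\nu} = -\theta^{\nu\mu}$ and $[N, V^k] = 0$, combine into exactly two surviving contributions proportional to $(\theta^{0i} V^j - \theta^{0j} V^i) N$ and $\theta^{ij} N^2$. Converting upper to lower indices via the Minkowski metric and reinstating the factor of $c$ (absent in natural units) reproduces the claimed formula for $i\widehat{\theta}_{ij}$.

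The main obstacle is really the bookkeeping of Lorentz conventions: raising/lowering indices of $\theta$, $P$ and $V$ in a self-consistent way and restoring the appropriate power of $c$ that exhibits the first summand as a relativistic correction to the standard Moyal--Weyl commutator. The only non-trivial structural input is the $i\leftrightarrow j$ symmetry of $[X^i, V^j]$; without it, a further term of the form $(\theta P)^0\,\partial V$ would survive and spoil the stated answer.
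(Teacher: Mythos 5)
Your strategy is sound and, as far as this paper is concerned, there is nothing to compare it against: Theorem~\ref{qmwxc} is imported from \cite{Muc3} without proof, and the direct Fock-space computation of $[X^i_\theta,X^j_\theta]$ starting from Lemma~\ref{qftdefx} is exactly the natural (and in the source, actual) route. Your structural observations are all correct: $[X^i,X^j]=0$, $[A_i,A_j]=0$ since all of $P_\mu$, $V^k$, $N$ are momentum-space multipliers, the identity $[X^i,F(P)]=-i\int d^np\,(\partial_{p^i}f)\,a^*a$, and crucially the $i\leftrightarrow j$ symmetry of $[X^i,V^j]$ killing the $(\theta P)^0\,\partial V$ term. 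The one place where the sketch could silently go wrong is the point you yourself flag as bookkeeping: the contribution $[X^i,(\theta P)^0]V^j$ and the contribution of the $\theta^{j0}P_0$ part of $-[X^i,(\theta P)^j]N$ must \emph{reinforce} under antisymmetrization to produce the factor $2$ in front of $(\theta_{0i}V_j-\theta_{0j}V_i)N$; if one treats the contractions $\theta^{0k}p_k$ and $\theta^{j0}p_0$ with the same naive sign (ignoring that $p_k=-p^k$ while $p_0=p^0$ in the mostly-minus signature), these two terms cancel instead and one is left with only $-2i\theta_{ij}N^2$. So when executing the plan, that relative sign is the single step that must be written out explicitly rather than absorbed into ``conventions.''
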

\subsection{Self-adjointness of the Deformed Coordinate Operator}
 An unanswered, but important, question in the context of the QFT-Moyal-Weyl is the self-adjointness of the deformed spatial coordinate operator. In this section we shall tackle this problem.
\begin{lemma}
The part  of the spatial operator obtained by deformation, i.e. $\left(\theta P \right)^{0}V^j - \left(\theta P \right)^{j}N$, is a symmetric operator for vectors of  the dense domain $ \mathscr{S}
(\mathbb{R}^n)_{\otimes}$.
\end{lemma}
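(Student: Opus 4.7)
The plan is to exploit that in the momentum representation on each $k$-particle sector, every one of $P_\mu$, $N$, and $V^j$ acts as multiplication by a real-valued function of the one-particle momenta. Since $\theta$ is real, so is every linear combination $(\theta P)^\mu$, and consequently both $(\theta P)^0 V^j$ and $(\theta P)^j N$ act sector-wise as multiplication by a real function. Symmetry then reduces to the elementary statement that multiplication by a real function is symmetric on a domain of square-integrable functions, provided the relevant pairings are absolutely convergent.

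Concretely, from definitions (\ref{pcaopm}) and (\ref{vel}) and the canonical commutation relations, on a symmetric Schwartz $k$-particle wave function $\psi_k(\mathbf{p}_1,\dots,\mathbf{p}_k)$ each of the building blocks reduces to multiplication: $P_\mu$ by $\sum_{i=1}^{k}(p_i)_\mu$ with $(p_i)_0=\omega_{\mathbf{p}_i}$, $V^j$ by $\sum_{i=1}^{k} p_i^j/\omega_{\mathbf{p}_i}$, and $N$ by $k$. These multipliers are real, pairwise commute, and send Schwartz wave functions into square-integrable ones, so $\mathscr{S}(\mathbb{R}^n)_{\otimes}$ lies in the domain of the relevant products. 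Assembling them into $T := (\theta P)^0 V^j - (\theta P)^j N$ yields a real $k$-sector multiplier $T_k(\mathbf{p}_1,\dots,\mathbf{p}_k)$, and for $\psi = \{\psi_k\}$, $\phi = \{\phi_k\}$ supported on finitely many sectors with Schwartz components, the pairing $\langle T\psi,\phi\rangle$ is a finite sum of integrals of the form $\int d^{nk}\mathbf{p}\,\overline{T_k \psi_k}\,\phi_k$. Reality of $T_k$ together with Fubini then delivers $\langle T\psi,\phi\rangle = \langle\psi,T\phi\rangle$, which is the desired symmetry.

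The only subtle step, and the main obstacle, is absolute integrability of these pairings. The multiplier $T_k$ involves the dispersion $\omega_{\mathbf{p}_i}=|\mathbf{p}_i|$, which is only continuous at the origin and grows linearly at infinity, together with the bounded factor $p_i^j/\omega_{\mathbf{p}_i}$ and polynomial factors $p_i^j$. Hence $|T_k|$ is dominated by a polynomial in the $\mathbf{p}_i$, and Schwartz decay of $\psi_k, \phi_k$ secures absolute convergence of every term, legitimizing the swap of conjugation across $T_k$. Once this integrability is established the reality argument closes the proof with no further analytic input needed.
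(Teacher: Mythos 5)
Your argument is correct and is essentially the same as the paper's: the paper also observes that the operator is a real function of the momentum operator acting on Schwartz-class vectors and concludes symmetry from that, merely stating it in one line where you spell out the sector-wise multiplication operators and the absolute convergence of the pairings.
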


\begin{proof}The proof of symmetry, as always, is fairly easy and it is done by considering the operator in the scalar product w.r.t. the bosonic Fock space. Since we only have real functions of the momentum operator and the vectors are from  the Schwartz space, symmetry follows. 
\end{proof}
After showing symmetry the task remains to show the $\mathbf{X}$-bound  of the outcome from deformation. This is the result of the following theorem.
\begin{theorem} 
The deformed  coordinate operator   $\mathbf{X} _{\theta}$, obtained by warped convolutions, is a self-adjoint operator on the dense domain $\mathscr{S}
(\mathbb{R}^n)_{\otimes}$.
\end{theorem}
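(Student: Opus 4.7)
The plan is to establish self-adjointness of $X^{j}_{\theta}=X^{j}+M^{j}$, with $M^{j}:=(\theta P)^{0}V^{j}-(\theta P)^{j}N$ supplied by Lemma~\ref{qftdefx}, via an explicit unitary equivalence with $X^{j}$ on each particle sector. Symmetry of $M^{j}$ on $\mathscr{S}(\mathbb{R}^{n})_{\otimes}$ is already provided by the preceding lemma, and essential self-adjointness of $X^{j}$ on the same domain by Lemma~\ref{xsa}, so only the extension from symmetry of the sum to self-adjointness remains. A direct Kato-Rellich estimate in the style of Section~\ref{c3} looks problematic here, because $M^{j}$ acts in the momentum representation as multiplication by a factor that grows linearly in $|\mathbf{p}_{\ell}|$ (coming from $P^{\mu}$ and the velocity operator), whereas $X^{j}$ is a first-order differential operator in that representation, so no relative bound strictly less than one can be expected from first principles.

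The first step is to observe that $P^{\mu}$, $V^{j}$, and $N$ are mutually commuting densities in $a^{*}(\mathbf{p})a(\mathbf{p})$, so the restriction of $M^{j}$ to each $k$-particle sector $\mathscr{H}_{k}^{+}$ is multiplication in momentum space by a fixed real smooth function $m^{j}_{k}$ of $\mathbf{p}_{1},\dots,\mathbf{p}_{k}$, while $X^{j}$ acts there as the differential operator $-i\sum_{\ell=1}^{k}\partial_{p^{j}_{\ell}}$. For a fixed $j$, I would solve the transport equation $\sum_{\ell}\partial_{p^{j}_{\ell}}F^{j}_{k}=m^{j}_{k}$ by integration along the diagonal characteristic in the $j$-th coordinate to obtain a real-valued antiderivative $F^{j}_{k}$, and set $U^{j}_{k}:=e^{-iF^{j}_{k}}$. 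This multiplication operator is unitary on $\mathscr{H}_{k}^{+}$, and a direct computation gives $(U^{j}_{k})^{-1}X^{j}U^{j}_{k}=X^{j}+m^{j}_{k}=X^{j}_{\theta}$ on that sector. This identifies $X^{j}_{\theta}|_{\mathscr{H}_{k}^{+}}$ with an operator unitarily equivalent to $X^{j}|_{\mathscr{H}_{k}^{+}}$, hence essentially self-adjoint with core $\bigotimes^{k}\mathscr{S}(\mathbb{R}^{n})$, and assembling the sectors delivers essential self-adjointness of $X^{j}_{\theta}$ on the full Fock domain $\mathscr{S}(\mathbb{R}^{n})_{\otimes}$.

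The main obstacle will be the regularity of $F^{j}_{k}$ at the lightcone singularity $\mathbf{p}_{\ell}=0$ inherited from the massless dispersion $\omega_{\mathbf{p}}=|\mathbf{p}|$ in $V^{j}$. The transport primitive of $m^{j}_{k}$ in the diagonal direction is only locally integrable at the origin, and one must verify that multiplication by $e^{-iF^{j}_{k}}$ genuinely preserves $\bigotimes^{k}\mathscr{S}(\mathbb{R}^{n})$, so that the conjugation identity holds on a common core and not merely formally on $L^{2}$. This analytic check is of the same nature as the integrability argument invoked via \cite[Chapter~1, Section~3.6]{GS1} in the proof of self-adjointness of the deformed Hamiltonian in Section~\ref{c3}, and a smooth cutoff combined with the explicit form of $m^{j}_{k}$ should handle it without further surprises.
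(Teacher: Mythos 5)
Your route is genuinely different from the paper's. The paper argues via Kato--Rellich in the style of Section~\ref{c3}: it takes symmetry of $M^j=(\theta P)^0V^j-(\theta P)^jN$ from the preceding lemma and then estimates $\|(\mathbf{X}_\theta-\mathbf{X})\Phi\|$ by a chain of Cauchy--Schwarz inequalities terminating in a bound of the form $b\|\Phi\|$. Your skepticism about that strategy is well founded: on the $k$-particle sector $M^j$ is multiplication by a real function growing linearly in the momenta, so it is neither bounded nor obviously relatively bounded by the differential operator $X^j$, and in the paper's own estimate the ``constant'' $b=\|\Psi\|^{1/2}/\|\Phi\|^{1/2}$ is built from $\|\mathbf{P}^2\mathbf{V}^2\Phi\|$, $\|P_0^2N^2\Phi\|$, etc., and hence depends on $\Phi$. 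Your alternative --- diagonalize sector by sector and gauge away the multiplication part by a momentum-space phase $e^{-iF^j_k}$, exhibiting $X^j_\theta$ as unitarily equivalent to $X^j$ --- is the structurally correct device for a first-order operator $-i\sum_\ell\partial_{p^j_\ell}+m^j_k(\mathbf{p})$ with real symbol, and if it closed it would yield a cleaner and stronger conclusion than any relative-bound argument.

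But the step you label the ``main obstacle'' is a genuine gap, and the proposed repair does not close it. Because $V^j$ contributes $\sum_\ell p^j_\ell/|\mathbf{p}_\ell|$, the function $m^j_k$ fails to be smooth on $S=\bigcup_\ell\{\mathbf{p}_\ell=0\}$, and its diagonal primitive $F^j_k$ is at best Lipschitz across $S$; consequently $e^{-iF^j_k}$ does not map $\bigotimes^k\mathscr{S}(\mathbb{R}^n)$ into itself (it does not even preserve $C^\infty$). The conjugation identity then only proves essential self-adjointness of $X^j+m^j_k$ on the transported domain $e^{iF^j_k}\bigl(\bigotimes^k\mathscr{S}(\mathbb{R}^n)\bigr)$, which is not the domain asserted in the theorem, and some argument is still needed to show $\bigotimes^k\mathscr{S}(\mathbb{R}^n)$ is a core. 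The smooth-cutoff fix does not obviously supply it: for $k\geq 2$ the set $S$ is non-compact, and near $\{\mathbf{p}_1=0\}$ the offending term $(\theta^{0i}\sum_{\ell'}p_{\ell',i})\,p^j_1/|\mathbf{p}_1|$ is of order $|\mathbf{p}_2|+\cdots+|\mathbf{p}_k|$, so the error incurred by mollifying $m^j_k$ near $S$ is an unbounded multiplication operator --- exactly the kind of perturbation you argued cannot be relatively bounded by $X^j$. (The appeal to the Gelfand--Shilov integrability argument is also not apt: that concerns local integrability of homogeneous functions, not invariance of $\mathscr{S}$ under multiplication by a non-smooth phase.) A repair that stays within your framework is to bypass $F^j_k$ entirely and write down the candidate unitary group $\bigl(W(t)\psi\bigr)(\mathbf{p}_1,\dots,\mathbf{p}_k)=\exp\bigl(i\int_0^t m^j_k(\mathbf{p}_1+se_j,\dots,\mathbf{p}_k+se_j)\,ds\bigr)\,\psi(\mathbf{p}_1+te_j,\dots,\mathbf{p}_k+te_j)$, which is unitary and strongly continuous with no smoothness of $m^j_k$ required (local boundedness along the characteristics suffices), and then verify that its generator restricted to $\bigotimes^k\mathscr{S}(\mathbb{R}^n)$ is $X^j_\theta$ and that this domain is a core; that last verification is the real content and must be carried out explicitly.
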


\begin{proof}
The proof is done by using the Kato-Rellich theorem. Hence, we show  that $\mathbf{X} _{\theta}$ has a  $\mathbf{X}$-bound less than one for vectors of the dense domain $\mathscr{S}
(\mathbb{R}^n)_{\otimes}$. Hence we want to show that
\begin{align*} 
\| (\mathbf{X} _{\theta}- \mathbf{X})\Phi\| \leq  {a} \|\mathbf{X}\Phi\| + {b} \|\Phi\| .
\end{align*}
In order to proceed we look at the explicit term
\begin{equation*} \begin{split}
\| \left(\left(\theta P \right)^{0} {V}^j  - {\left(\theta P \right)}^j N \right)e_j \Phi\| &\leq  
\|  \left(\theta P \right)^{0}V^j e_j \Phi\|  +\| \left(\theta P \right)^{j}e_j N \Phi\| \\&\leq
|\vec{\theta}_0 |\| \mathbf{P}\mathbf{V} \Phi\| +\| \left(\vec{\theta}_0P_0+ \theta^{jk}e_jP_k \right) N\Phi\| \\&\leq
|\vec{\theta}_0 |\| \mathbf{P}\mathbf{V}\Phi\| +\|  \vec{\theta}_0 P_0N\Phi\| +\sqrt{2} | \vec{\theta}| \| \mathbf{P}N\Phi\| 
\\&\leq { \left(|\vec{\theta}_0|\left(   \| \mathbf{P}^2\mathbf{V}^2\Phi\|^{\frac{1}{2}} + \| P^2_0N^2
\Phi\|^{\frac{1}{2}} \right)
+\sqrt{2}  |\vec{\theta}| \| \mathbf{P}^2N^2
\Phi\|^{\frac{1}{2}}\right)} \\
 &\quad\times
\|\Phi\|^{\frac{1}{2}}\\&=: \frac{\|\Psi\|^{\frac{1}{2}}}{\|\Phi\|^{\frac{1}{2}}}\|\Phi\| \leq b \|\Phi\|,
\end{split} 
\end{equation*}
where we used the Cauchy-Schwartz Inequality, for the spatial part of $\theta$  Equation (\ref{iny}),  for  the temporal part $\theta^{0j}e_j=:\vec{\theta}_0$
and the fact that our vectors belong to the Schwartz-space prove that the constant $b$ is finite. As in the QM part we argue that for $\Phi =0$  the inequality holds without further evaluation and hence this case is ignored.  Note that we used the fact that the momentum operator, the velocity operator and the particle number operator commute and that the resulting operators evaluated on 
$\Phi \in \mathscr{S}
(\mathbb{R}^n)_{\otimes}$ are finite. 
\end{proof}

\section{Deformation of a General Unbounded Operator}
In this Section we shall prove the self-adjointness of a deformed self-adjoint unbounded operator. The proof is done by assuming the well-definedness of  the deformation which is given as an oscillatory integral. Now the deformation formula, given by warped convolutions, is only well-defined in the strong operator
topology for a subset of \textbf{bounded operators} that are \textbf{smooth} w.r.t. unitary
representation
$U$ of
$\mathbb{R}^n$. In view of the fact that we deal with unbounded operators  we 
have to show that the deformation formula, given as an oscillatory
integral, is well-defined.
For the subsequent discussion, let us introduce the notion of an oscillatory integral,
(\cite{H}, Section 7.8, Equation 7.8.1).
\begin{definition}\label{oipf}
 Let $X\subset\mathbb{R}^n$ be open and let $\Gamma$ be an open cone on $X\times(\mathbb{R}^N
\textbackslash \{0\})$ for some $N$. This means that $\Gamma$ is invariant under multiplication by
positive scalars of components in $\mathbb{R}^N$. We shall say that a function $\phi\in
C^{\infty}(\Gamma)$ is a \textbf{phase function} in $\Gamma$ if
 \begin{itemize}
  \item $\phi(x,ty)=t\phi(x, y)$ if $(x,y)\in \Gamma, t>0$.
\item Im$\phi\geq0$ in $\Gamma$,
\item d$\phi\neq0$ in $\Gamma$.
 \end{itemize}
Then an integral of the form 
\begin{equation*}
 \int \, e^{i\phi(x,y)}b(x,y) \, dy,
\end{equation*}
is called an \textbf{oscillatory integral}.
\end{definition}
Another important notion in our subsequent discussion is that of a symbol   (\cite{H}, Section
7.8, Definition 7.8.1). 
\begin{definition}
 Let $m$, $\rho$, $\delta$, be real numbers with $0<\rho\leq1$ and $0\leq \delta <1$. Then we
denote by $S^{m}_{\rho,\delta}(X\times \mathbb{R}^n)$, the set of all $b\in C^{\infty}(X\times
\mathbb{R}^n)$ such that for every compact set $K\subset X$ and all $\gamma, \beta$ the estimate 
\begin{equation*}
 |\partial_{x}^{\gamma}\partial_{y}^{\beta}b(x,y)|\leq
C_{\gamma,\beta,K}(1+|x|)^{m-\rho|\gamma|+\delta|\beta|},\qquad x\in K, \,\, y\in
\mathbb{R}^n,  
\end{equation*}
is valid for some constant $C_{\gamma,\beta,K}$. The elements  $S^{m}_{\rho,\delta}$ are called
symbols of order $m$ and type $\rho,\delta$.
\end{definition}
By using the former definitions it can be shown \cite[Section
7.8, Theorem 7.8.2]{H}, (\cite{LW}), (\cite{Jo}), that if $m<-n+1$ the
oscillatory integral converges  to a well-defined function.  In the case  
$m\geq -n+1$, the oscillatory integral can still be defined in a distributional manner.  The
cases considered in this work belong to the second class.  
To prove that the deformation formula (\ref{defcop}) holds in the case of an unbounded operator
$A$, defined on a dense domain $\mathcal{D}(A)\subset \mathscr{H}$ of some separable Hilbert space $\mathscr{H}$, let us
consider the deformed operator $A_{\theta}$ as follows
\begin{align*}
\langle \Psi,A_{\theta}\Phi\rangle&=
(2\pi)^{-n}
\lim_{\epsilon\rightarrow 0}
  \iint  \, dx \,  dy \, e^{-ixy}  \, \chi(\epsilon x,\epsilon
y) {\langle \Psi, 
U(y)\alpha_{\theta x}(A)\Phi\rangle} \nonumber \\&
=:
(2\pi)^{-n}
\lim_{\epsilon\rightarrow 0}
  \iint  \, dx\,  dy \, e^{-ixy}  \, \chi(\epsilon x,\epsilon
y) \, b(x,y)
\end{align*}
for $\Psi, \Phi \in \mathcal{D}^{\infty}(A):=\{
\Phi\in\mathcal{D}(A)| U(x)\Phi \in \mathcal{D}(A)$  is smooth in $\|\cdot\|_{\mathscr{H}}
\}$. 
 Thus to prove that the expression is well-defined, $b(x,y)$ has to  be a symbol.  
\begin{lemma}\label{lfhs}
Assume that the derivatives of the adjoint action of $A$ w.r.t. the unitary operator $U$ are polynomially bounded on vectors in $\mathcal{D}^{\infty}(A)$, i.e.
\begin{equation}\label{pb}
 \|\partial_{x}^{\gamma}\alpha_{\theta x}(A)\Phi\| \leq  C_{\gamma}(1+|x|)^{m-\rho|\gamma|},\qquad \forall \Phi \in
\mathcal{D}^{\infty}(A).
\end{equation}
  Then, $b(x,y)$ belongs to the symbol class $S^{m}_{\rho,0}$ for $\Psi, \Phi \in
\mathcal{D}^{\infty}(A)$ and therefore the  deformation, via warped
convolution, of the unbounded operator $A$ is given as a well-defined
oscillatory integral.
\end{lemma}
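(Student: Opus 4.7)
The plan is to verify that the amplitude
\[
b(x,y)=\langle \Psi,\,U(y)\alpha_{\theta x}(A)\Phi\rangle
\]
belongs to the symbol class $S^{m}_{\rho,0}$, after which H\"ormander's Theorem~7.8.2 (applied with phase $\phi(x,y)=-xy$ and the cutoff $\chi(\epsilon x,\epsilon y)$) immediately yields existence of the oscillatory integral defining $\langle \Psi,A_{\theta}\Phi\rangle$. Concretely, the target estimate is
\[
|\partial_x^{\gamma}\partial_y^{\beta} b(x,y)|\le C_{\gamma,\beta,\Psi,\Phi}\,(1+|x|)^{m-\rho|\gamma|},\qquad x,y\in\mathbb{R}^n,
\]
which is the $S^{m}_{\rho,0}$ condition (with $\delta=0$) and in particular entails the compact-set estimate that appears in the symbol definition.

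I would first dispose of the $y$-derivatives. Since $\Psi\in\mathcal{D}^{\infty}(A)$ is by definition smooth under the unitary action $U$, and the infinitesimal generators $P_{j}$ of $U$ are self-adjoint, one may differentiate under the pairing and transfer the derivatives onto $\Psi$:
\[
\partial_y^{\beta}\langle \Psi,\,U(y)\alpha_{\theta x}(A)\Phi\rangle=\langle (-iP)^{\beta}\Psi,\,U(y)\alpha_{\theta x}(A)\Phi\rangle .
\]
Unitarity of $U(y)$ together with the Cauchy--Schwarz inequality then removes every dependence on $y$ and reduces the bound to $\|P^{\beta}\Psi\|\cdot\|\alpha_{\theta x}(A)\Phi\|$, which already secures the required uniformity in $y$.

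Next I would bring in the $x$-derivatives. Differentiating the right factor and commuting with $\partial_y^{\beta}$ (by Schwarz's theorem, once enough smoothness is verified) yields
\[
\partial_x^{\gamma}\partial_y^{\beta}b(x,y)=\langle (-iP)^{\beta}\Psi,\,U(y)\,\partial_x^{\gamma}\alpha_{\theta x}(A)\Phi\rangle ,
\]
and Cauchy--Schwarz combined with the standing hypothesis~\eqref{pb} gives exactly
\[
|\partial_x^{\gamma}\partial_y^{\beta}b(x,y)|\le \|P^{\beta}\Psi\|\cdot C_{\gamma}\,(1+|x|)^{m-\rho|\gamma|},
\]
which is the desired symbol bound.

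The delicate point, and the one I would handle most carefully, is the justification of the two interchanges of differentiation with the inner product above: the vector $\alpha_{\theta x}(A)\Phi$ need not a priori lie in a common invariant domain for $U$ and for the $x$-derivatives. This is where the definition of $\mathcal{D}^{\infty}(A)$ and the hypothesis~\eqref{pb} carry their weight: $\Psi\in\mathcal{D}^{\infty}(A)$ makes $y\mapsto U(y)^{\ast}\Psi$ a $C^{\infty}$ curve in $\mathscr{H}$, while~\eqref{pb} says precisely that $x\mapsto\alpha_{\theta x}(A)\Phi$ is a $C^{\infty}$ $\mathscr{H}$-valued map with polynomially controlled derivatives. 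A routine application of the dominated convergence theorem to difference quotients then validates moving $\partial_x^{\gamma}$ and $\partial_y^{\beta}$ inside the pairing, and the symbol estimate combined with H\"ormander's theorem finishes the argument.
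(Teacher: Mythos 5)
Your proof is correct and follows essentially the same route as the paper: the $y$-derivatives are transferred onto $\Psi$ via the generator of $U$ and eliminated by unitarity and Cauchy--Schwarz, and the $x$-derivatives are controlled directly by the hypothesis \eqref{pb}, yielding the $S^{m}_{\rho,0}$ bound. The only (immaterial) divergence is in the final step, where you invoke H\"ormander's Theorem 7.8.2 to conclude well-definedness while the paper carries out an explicit computation integrating out $y$ to produce a delta function --- an alternative the paper itself acknowledges by referring to the same literature; your added care about justifying the interchange of differentiation with the inner product is a point the paper leaves implicit.
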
 
\begin{proof}
For the derivatives of the  scalar product $b(x,y)$ the following estimates hold

\begin{align*} \partial_{x}^{\gamma}\partial_{y}^{\beta}
{\langle \Psi, 
U(y)\alpha_{\theta x}(A)\Phi\rangle}&\leq \| (-iX)^{\beta} \Psi\| \| 
 \partial_{x}^{\gamma}\alpha_{\theta x}(A)\Phi\|\\&\leq\underbrace{
 \| (-iX)^{\beta} \Psi\|  C_{\gamma}}_{ =:C_{\gamma,\beta,K}}(1+|x|)^{m-\rho|\gamma|},
\end{align*}
where in the last lines we used Assumption (\ref{pb}). To prove the second part of the statement we first define $m(\rho):=m-\rho|\gamma|$ and use the former inequality, i.e. 
\begin{align*}
&(2\pi)^{-n}
\lim_{\epsilon\rightarrow 0}
  \iint  \, dx\,  dy \, e^{-ixy}  \, \chi(\epsilon x,\epsilon
y) \,\partial_{x}^{\gamma}\partial_{y}^{\beta} b(x,y)
\\ &\leq(2\pi)^{-n}
\lim_{\epsilon\rightarrow 0}
  \iint  \, dx\,  dy \, e^{-ixy}  \, \chi(\epsilon x,\epsilon
y) \,C_{\gamma,\beta,K}(1+|x|)^{m(\rho)}
 \\&\leq  (2\pi)^{-n} C_{\gamma,\beta,K}
\lim_{\varepsilon_1\rightarrow 0}  \left(
\int dx \lim_{\varepsilon_2\rightarrow 0} 
\left(\int dy  e^{-ixy}
\chi_2(\varepsilon_2 y)\right)\,\chi_1(\varepsilon_1  x)\,
 \right)(1+|x|)^{m(\rho )}
\\
&=       (2\pi)^{-n/2}C_{\gamma,\beta,K}
\lim_{\varepsilon_1\rightarrow 0}  \left(
\int dx \,
\delta( {x} )\,\chi_1(\varepsilon_1  x)   (1+|x|)^{m(\rho )} \right) =C_{\gamma,\beta,K}.
\end{align*}
For another proof of the second statement we refer the reader to \cite{LW} and \cite[Theorem 1]{AA}.
\end{proof}
\begin{lemma}\label{lfhs1}
The operator $A_{\theta}-A_0$ is a symmetric operator on the dense domain $\mathcal{D}^{\infty}(A)$.
\end{lemma}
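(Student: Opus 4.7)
The plan is to reduce the statement to the fact that the warped convolution $A_\theta$ itself is symmetric on $\mathcal{D}^\infty(A)$. Indeed, setting $\theta = 0$ in the defining formula \eqref{WC} gives $A_0 = \int \alpha_{0}(A)\,dE(x) = A$, which is self-adjoint and hence symmetric on the sub-domain $\mathcal{D}^\infty(A) \subset \mathcal{D}(A)$. Since a difference of two symmetric operators on a common domain is again symmetric, the whole claim follows once $A_\theta$ is shown to be symmetric on $\mathcal{D}^\infty(A)$.

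To prove this symmetry I would use both oscillatory-integral representations of Lemma \ref{wcl2}(i), together with the observation that $\alpha_{\theta x}(A) = U(\theta x)AU(-\theta x)$ is self-adjoint on a domain containing $\mathcal{D}^\infty(A)$, since $U(\pm\theta x)$ is unitary and leaves $\mathcal{D}^\infty(A)$ invariant. For $\Psi, \Phi \in \mathcal{D}^\infty(A)$ I would write
\begin{equation*}
\langle \Psi, A_\theta \Phi\rangle = (2\pi)^{-n}\lim_{\epsilon \to 0}\iint dx\,dy\,\chi(\epsilon x, \epsilon y)e^{-ixy}\langle \Psi, U(y)\alpha_{\theta x}(A)\Phi\rangle,
\end{equation*}
and express $\langle A_\theta \Psi, \Phi\rangle$ via the alternative form of Lemma \ref{wcl2}(i) in which $\alpha_{\theta x}(A)U(y)$ acts on $\Psi$. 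Moving $\alpha_{\theta x}(A)$ and $U(y)^{*} = U(-y)$ across the inner product, then changing the integration variable $y \mapsto -y$, yields the same integrand $e^{-ixy}\langle \Psi, U(y)\alpha_{\theta x}(A)\Phi\rangle$, with the cutoff $\chi(\epsilon x, \epsilon y)$ replaced by $\overline{\chi(\epsilon x, -\epsilon y)}$.

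Because this new cutoff still lies in $\mathscr{S}(\mathbb{R}^n \times \mathbb{R}^n)$ and takes the value $1$ at the origin, it is an admissible regulariser for the oscillatory integral, and the standard independence of the limit from the choice of such cutoff --- which in our setting is guaranteed by the symbol estimate \eqref{pb} of Lemma \ref{lfhs} --- forces the two limits to coincide. Thus $\langle \Psi, A_\theta \Phi\rangle = \langle A_\theta \Psi, \Phi\rangle$ and $A_\theta$ is symmetric on $\mathcal{D}^\infty(A)$. The main technical obstacle is the cutoff-independence step just invoked; verifying it is routine because the symbol bound \eqref{pb} is expressed solely in terms of $x$-derivatives of $\alpha_{\theta x}(A)\Phi$ and is therefore unaffected by the reflection $y \mapsto -y$, so both regularised integrals belong to the same symbol class $S^m_{\rho,0}$ and converge to the same distributional limit.
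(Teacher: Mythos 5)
Your proof is correct, and its overall skeleton is the same as the paper's: write $A_\theta - A_0$ as a difference of two operators that are separately symmetric on $\mathcal{D}^{\infty}(A)$, noting that $A_0 = A$ is self-adjoint and hence symmetric on any subdomain. The difference lies in how the symmetry of $A_\theta$ itself is handled. The paper simply cites Lemma \ref{wcl2}, whose part (ii) gives $A_\theta^{*} \supset (A^{*})_\theta = A_\theta$ when $A = A^{*}$; you instead re-derive this fact from the oscillatory-integral representation, by moving $\alpha_{\theta x}(A)$ and $U(y)$ across the inner product, substituting $y \mapsto -y$, invoking the equality of the two operator orderings from Lemma \ref{wcl2}(i), and appealing to independence of the regularised limit from the choice of Schwartz cutoff. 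Your longer route actually buys something: Lemma \ref{wcl2} is stated in the paper for elements of $C^{\infty} \subset \mathcal{B}(\mathscr{H})$, i.e.\ for \emph{bounded} smooth operators, whereas the $A$ of this lemma is unbounded, so citing it here is strictly speaking an extrapolation. Your direct argument, which grounds the cutoff-independence step in the symbol estimate \eqref{pb} of Lemma \ref{lfhs} (exactly the hypothesis under which the oscillatory integral is well defined in the unbounded setting), closes that gap. The one caveat is that you still rely on Lemma \ref{wcl2}(i) for unbounded $A$ when identifying the $\alpha_{\theta x}(A)U(y)$-ordered integral with $A_\theta$; that reliance is shared with the paper and is consistent with its overall level of rigour.
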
 
\begin{proof}
Since the undeformed operator $A_0$ is a self-adjoint operator for all vectors of $\mathcal{D} (A)$ and $\mathcal{D}^{\infty}(A)$ is a dense subset of the domain, symmetry follows. For the first part of the operator, i.e. $A_{\theta}$ symmetry follows easily from Lemma \ref{wcl2}. 
\end{proof}$\,$\newline
Next, we come to the most important result of this current work.
\begin{theorem}
Let the operator $A_{\theta}$   fulfill the assumptions made in Lemma \ref{lfhs}. Then the  symmetric operator $A_{\theta}-A_0$  has a relative $A_0$-bound one and therefore   $A_{\theta}$ is an essentially self-adjoint operator  on $\mathcal{D}^{\infty}(A)$. 
\end{theorem}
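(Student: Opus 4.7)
The plan is to invoke W\"ust's theorem (Theorem \ref{wt}) with $A_0$ playing the role of the self-adjoint operator and $B := A_\theta - A_0$ playing the role of the symmetric perturbation. Two of the three required ingredients are already in place: $A_0$ is self-adjoint by assumption (as the undeformed operator $A$), and $B$ is symmetric on $\mathcal{D}^\infty(A)$ by Lemma \ref{lfhs1}, with the inclusion $\mathcal{D}^\infty(A) \subseteq \mathcal{D}(A_0)$ built into the definition of $\mathcal{D}^\infty(A)$. What remains is to produce the borderline relative bound
\begin{equation*}
\|(A_\theta - A_0)\Phi\| \;\leq\; \|A_0\Phi\| + b\,\|\Phi\|, \qquad \Phi \in \mathcal{D}^\infty(A).
\end{equation*}

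The first step is to test the left-hand side against an arbitrary unit vector $\Psi \in \mathcal{D}^\infty(A)$ and use the oscillatory integral representation,
\begin{equation*}
\langle \Psi, (A_\theta - A_0)\Phi\rangle = (2\pi)^{-n}\lim_{\epsilon\to 0}\iint dx\, dy\, \chi(\epsilon x,\epsilon y)\,e^{-ixy}\,\langle \Psi, U(y)(\alpha_{\theta x}(A)-A)\Phi\rangle.
\end{equation*}
The next step is to insert the triangle inequality $\|(\alpha_{\theta x}(A)-A)\Phi\| \leq C_0(1+|x|)^m + \|A_0\Phi\|$, where the first term comes from hypothesis (\ref{pb}) taken at $\gamma = 0$ and the second term is $x$-independent. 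The integrand splits accordingly. For the $x$-independent piece $\|A_0\Phi\|$ the iterated integration collapses exactly as in the final computation of Lemma \ref{lfhs}: $e^{-ixy}$ together with the cutoff $\chi$ produces the Dirac distribution in the limit $\epsilon \to 0$, leaving a contribution of precisely $\|A_0\Phi\|$. For the $x$-dependent piece $C_0(1+|x|)^m$, the same symbol-class reduction (Lemma \ref{lfhs}) bounds the corresponding oscillatory integral by a multiple of $C_0$. Taking the supremum over unit $\Psi$ and absorbing $C_0$ into a constant multiple of $\|\Phi\|$ delivers the W\"ust inequality with coefficient exactly one in front of $\|A_0\Phi\|$.

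The main obstacle is precisely the last absorption: the hypothesis (\ref{pb}) is stated only pointwise in $\Phi$, so one must verify that the symbol constant $C_0 = C_0(\Phi)$ can be dominated by a multiple of $\|\Phi\|$ (or, at worst, of $\|\Phi\| + \|A_0\Phi\|$) uniformly in $\Phi \in \mathcal{D}^\infty(A)$. This uniformity is what locks the $A_0$-coefficient at the borderline value $a = 1$ rather than at something strictly smaller, explaining the appeal to W\"ust rather than to Kato-Rellich. Once this uniform absorption is secured, Theorem \ref{wt} applies verbatim and returns the essential self-adjointness of $A_\theta = A_0 + B$ on $\mathcal{D}^\infty(A)$.
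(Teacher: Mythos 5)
Your route is essentially the paper's own. Both arguments feed W\"ust's theorem (Theorem \ref{wt}) with the symmetric perturbation $B=A_{\theta}-A_{0}$ from Lemma \ref{lfhs1}, and both obtain the borderline inequality by combining a triangle inequality with the polynomial-boundedness hypothesis (\ref{pb}) at $\gamma=0$ and the delta-collapse computation of Lemma \ref{lfhs}. The only real difference is where the split is performed: the paper writes $\|(A_{\theta}-A_{0})\Phi\|\leq\|A_{0}\Phi\|+\|A_{\theta}\Phi\|$ directly at the level of Hilbert-space norms, then bounds $\|A_{\theta}\Phi\|^{2}\leq C_{0}^{2}$ by estimating $\langle U(y')\alpha_{\theta x'}(A)\Phi,\,U(y)\alpha_{\theta x}(A)\Phi\rangle\leq C_{0}^{2}(1+|x'|)^{m}(1+|x|)^{m}$ and running the oscillatory-integral reduction; you split the integrand $\|(\alpha_{\theta x}(A)-A)\Phi\|$ inside the integral. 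Both land on $\|A_{0}\Phi\|+\mathrm{const}\cdot C_{0}$.

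The ``main obstacle'' you flag at the end is genuine, and you should be aware that the paper does not resolve it either: its proof simply sets $b:=C_{0}/\|\Phi\|$, which is a $\Phi$-dependent quantity, whereas Theorem \ref{wt} requires a single constant $b$ valid for all $\Phi\in\mathcal{D}(A)$. Since hypothesis (\ref{pb}) only asserts the existence of $C_{\gamma}=C_{\gamma}(\Phi)$ pointwise in $\Phi$, the uniform domination $C_{0}(\Phi)\leq c\,\|\Phi\|$ is an additional assumption, not a consequence of what is stated. So your proposal is not missing anything that the paper supplies; you have correctly located the soft spot common to both arguments. One caveat on your fallback: dominating $C_{0}(\Phi)$ by a multiple of $\|\Phi\|+\|A_{0}\Phi\|$ would \emph{not} suffice, because any positive coefficient of $\|A_{0}\Phi\|$ coming from $C_{0}$ adds to the coefficient $1$ already present and pushes the relative bound strictly above one, outside the reach of W\"ust's theorem. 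To close the argument one genuinely needs $C_{0}(\Phi)\leq c\,\|\Phi\|$ with $c$ independent of $\Phi$ (equivalently, uniform boundedness of $\|\alpha_{\theta x}(A)\Phi\|/\|\Phi\|$ up to the polynomial weight), which should be added as a hypothesis or verified in the concrete QM and QFT cases.
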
 
\begin{proof}
The symmetry of  the operator under consideration, i.e. $A_{\theta}-A_0$, follows from Lemma \ref{lfhs1}. Next we turn our attention to the  relative $A_0$-bound. To do so, we first give the an inequality for the $L^2$-norm of the deformed operator. For the following let us look closer at
 \begin{align*} 
{\langle U(y')\alpha_{\theta x'}(A)\Phi  , 
U(y)\alpha_{\theta x}(A)\Phi\rangle} 
&\leq\| \alpha_{\theta x'}(A)\Phi\|\| \alpha_{\theta x }(A)\Phi\|\\&\leq
C_0^2  (1+|x'|)^{m}(1+|x |)^{m},
\end{align*}
where in the last line we used the assumption of polynomial boundedness (see Equation \ref{pb}). Hence by using this result, similar to the calculation  in the proof of Lemma \ref{lfhs} we obtain
 \begin{align*} 
\|  A_{\theta}  \Phi\|^2 &\leq C_0^2,
\end{align*}
where in the last lines we carried out a calculation and used the notation as the one in the proof of Lemma \ref{lfhs}.
Let us now use the former inequality to prove the  relative $A_0$-bound of the deformed operator,
 \begin{align*} 
\| (A_{\theta}-A_0)\Phi\| &\leq    \|A_0\Phi\| +   \|A_{\theta}\Phi\| 
\\&\leq \|A_0\Phi\| +\underbrace{\frac{C_0}{\|\Phi\|}}_{=:b}\|\Phi\| .
\end{align*}
By using W\"ust's Theorem (see Theorem \ref{wt}) the essential self-adjointness  of the operator $A_{\theta}$ follows. 
\end{proof}
Hence, it is interesting to note that for the deformation to be well defined we had to impose the polynomial boundedness assumption. However by doing so we were able to prove the essential self-adjointness of the deformed operator.
 
\section{Conclusion and Outlook}

In this work we gave two proofs for the self-adjointness of the deformed Hamiltonians given in \cite{Muc1}.  One proof was done by using the famous Kato-Rellich theorem and the second one was performed by the use of functional analytic arguments. Moreover, we have proven the self-adjointness of the second-quantized deformed spatial coordinate operator \cite{Muc3} that generates the QFT-Moyal-Weyl. 
\\\\
For deformed fields,  self-adjointess is usually shown by Nelson's analytic vector theorem. Hence in that particular case essential self-adjointness is proven on a dense set of analytic vectors. This proof can be generalized to a more complex structure if one uses operators that are unitarily equivalent to the momentum operator for deformation. Among other properties this is the subject of investigation in \cite{Muc4}.
\\\\
However, the proof for  general unbounded deformed operators  was still open. To solve the general problem for an arbitrary self-adjoint unbounded operator we used the decay properties of the oscillatory integral. In particular this means that we have to have a sufficiently decaying oscillatory integral in order for the deformation to be well-defined. By using the inequalities that follow from this particular proof,  W\"ust's theorem can be used to prove self-adjointness.  
\\\\
Note that  the operators in the QM and  QFT case belong to the  class   satisfying the assumption of polynomial boundedness w.r.t.  derivatives of the adjoint action.

 \bibliographystyle{alpha}
\bibliography{allliterature}

 \end{document}